\definecolor{shadecolor}{gray}{0.9}
\definecolor{lightgraytable}{gray}{0.9}
\newcolumntype{L}[1]{>{\raggedright\let\newline\\\arraybackslash\hspace{0pt}}m{#1}}
\newcolumntype{C}[1]{>{\centering\let\newline\\\arraybackslash\hspace{0pt}}m{#1}}
\newcolumntype{R}[1]{>{\raggedleft\let\newline\\\arraybackslash\hspace{0pt}}m{#1}}
\renewcommand{\S}{{\mathcal S}}
\newcommand{\A}{{\mathcal A}}
\newcommand{\R}{{\mathbb R}}
\newcommand{\setN}{{\mathbb N}}
\newcounter{Theorem}
\newtheorem{assumption}[Theorem]{Assumption}
\title{The Stochastic Shortest Path Problem: \\ A polyhedral combinatorics perspective }
\author{ Matthieu Guillot \and  Gautier Stauffer}
\authorrunning{Guillot and Stauffer}
\institute{Univ. Grenoble Alpes, G-SCOP,
    38000 Grenoble,
    France.
    \email{\{matthieu.guillot,gautier.stauffer\}@g-scop.grenoble-inp.fr} 
}
\begin{document}
\date{}
\maketitle

\begin{abstract}
In this paper, we give a new framework for the stochastic shortest path problem in finite state and action spaces. Our framework generalizes both the frameworks proposed by Bertsekas and Tsitsiklis \cite{BertsekasTsitsiklis} and by Bertsekas and Yu \cite{Bertsekas16}. We prove that the problem is well-defined and (weakly) polynomial when (i) there is a way to reach the target state from any initial state and (ii) there is no transition cycle of negative costs (a generalization of negative cost cycles). These assumptions generalize the standard assumptions for the deterministic shortest path problem and our framework encapsulates the latter problem (in contrast with prior works). In this new setting, we can show that (a) one can restrict to deterministic and stationary policies, (b) the problem is still (weakly) polynomial through linear programming, (c) Value Iteration and Policy Iteration converge, and (d) we can extend Dijkstra's algorithm.  

\end{abstract}

\section{Introduction}

The Stochastic Shortest Path problem (SSP) is a Markov Decision Process (MDP) that generalizes the classic deterministic shortest path problem. We want to control an agent, who evolves dynamically in a system composed of different {\em states}, so as to converge to a predefined  {\em target}. The agent is controlled by taking {\em actions} in each time period\footnote{We focus here on discrete time (infinite) horizon problems.}~: actions are associated with costs and transitions in the system are governed by probability distributions that depend exclusively on the previous action taken and are thus independent of the past. We focus on finite state/action spaces~: the goal is to choose an action for each state, a.k.a. a {\em deterministic and stationary policy}, so as to minimize the total expected cost incurred by the agent before reaching the (absorbing) target state, when starting from a given initial state.

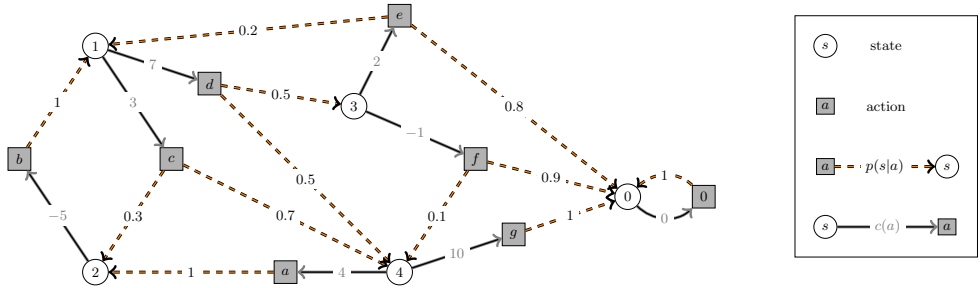
\begin{figure}[h]
\begin{center}
\scalebox{0.8}{
\begin{tikzpicture}
  \begin{scope}[scale=2.5,shape=circle,minimum size=0.5cm,fill=white]
    \tikzstyle{every node}=[draw, scale=0.75] 
    \node (1) at (0,2) {$1$};
    \node (2) at (0,0.5) {$2$};
    \node (4) at (2,0.5) {$4$};
    \node (3) at (1.7,1.6) {$3$};
    \node (5) at (3.5,1) {$0$};
    \node (a) at (1.25,0.5)  [rectangle,fill=black!30] {$a$};
    \node (b) at (-0.5,1.25)  [rectangle, fill=black!30]{$b$};
    \node (c) at (0.5,1.25)  [rectangle, fill=black!30]{$c$};
    \node (d) at (0.75,1.75)  [rectangle, fill=black!30]{$d$};
    \node (e) at (2,2.2)  [rectangle, fill=black!30]{$e$};
    \node (f) at (2.5,1.25)  [rectangle, fill=black!30]{$f$};
    \node (g) at (2.75,0.75)  [rectangle, fill=black!30]{$g$};
    \node (h) at (4,1)  [rectangle, fill=black!30]{$0$};
  \end{scope}
  
\tikzset{mystyle/.style={->,dashed, double=orange}} 
\tikzset{mystyle2/.style={->,gray,double=black}} 
\tikzset{every node/.style={fill=white,scale=0.75}} 
\tikzset{mystyle3/.style={->,dashed, double=orange,in=45,out=135}} 
\tikzset{mystyle4/.style={->,gray,double=black,in=-135,out=-45}}

  \path (1)     edge [mystyle2]    node   {$3$} (c)
      (c)     edge [mystyle]    node   {$0.7$} (4) 
      (4)     edge [mystyle2]    node   {$10$} (g)
      (g)     edge [mystyle]    node   {$1$} (5);
 
  \path (c)     edge [mystyle]    node   {$0.3$} (2)
      (2)     edge [mystyle2]    node   {$-5$} (b)
      (b)     edge [mystyle]    node   {$1$} (1)
      (1)     edge [mystyle2]    node   {$7$} (d)
      (d)     edge [mystyle]    node   {$0.5$} (3)
      (3)     edge [mystyle2]    node   {$-1$} (f)
      (f)     edge [mystyle]    node   {$0.9$} (5);

  \path (3)     edge [mystyle2]    node   {$2$} (e)
      (e)     edge [mystyle]    node   {$0.2$} (1);
  \path (4)     edge [mystyle2]    node   {$4$} (a)
      (a)     edge [mystyle]    node   {$1$} (2);

  \path (d)     edge [mystyle]    node   {$0.5$} (4);
   \draw (e) edge[mystyle] node   {$0.8$} (5);
   \draw (f) edge[mystyle] node   {$0.1$} (4);
   
   \path (5) edge [mystyle4]    node   {$0$} (h) 
    (h)     edge [mystyle3]    node   {$1$} (5);
   
   \draw (11.5,1.5) rectangle (14.5,5.5);
   \node (a''') at (12,4) [rectangle, draw,fill=black!30] {$a$};
   \node (s''') at (12,5) [shape=circle,draw,fill=white] {$s$};
   \node (text1) at (13,4) {action};
   \node (text2) at (13,5)  {state};

   \node (a') at (12,3) [rectangle, draw,fill=black!30] {$a$};
   \node (s') at (14,3) [shape=circle,draw,fill=white] {$s$};
   \draw (a')  edge[mystyle] node {$p(s|a)$} (s');

   \node (a'') at (14,2) [rectangle, draw,fill=black!30] {$a$};
   \node (s'') at (12,2) [shape=circle,draw,fill=white] {$s$};
   \draw (s'')  edge[mystyle2] node {$c(a)$} (a'');


\end{tikzpicture}}
\end{center}
\caption{A graphical representation of a SSP (with target state $0$)~: circles are states, squares are actions, dashed arrows indicate state transitions (probabilities) for a given action, and black edges represent actions available in a given state with corresponding cost.}\label{fig:1}
\end{figure}

More formally, a stochastic shortest path instance is defined by a tuple $(\mathcal{S}, \mathcal{A},J,P,c)$ where $\S=\{0,1,\ldots,n\}$ is a finite set of {\em states},  $\A=\{0,1,\ldots,m\}$ is a finite set of {\em actions}, $J$ is a 0/1 matrix with $m$ lines and $n$ columns and general term $J(a,s)$, for all $a\in \{1,...,m\}$ and $s\in \{1,...,n\}$, with $J(a,s)=1$ if and only if action $a$ is available in state $s$, $P$ is a {\em row substochastic matrix} with $m$ lines and $n$ columns and general term $P(a,s):=p(s|a)$ (probability of ending in $s$ when taking action $a$), for all  $a\in  \{1,...,m\}$, $s\in  \{1,...,n\}$, and a cost vector $c \in \R^{m}$. The state $0$ is called the {\em target} state and the action $0$ is the unique action available in that state. Action $0$ lead to state $0$ with probability $1$. When confusion may arise, we denote state $0$ by $0_\S$ and action $0$ by $0_\A$. A {\em row substochastic} matrix is a matrix with nonnegative entries so that every row adds up to at most $1$. We denote by $\mathcal{M}_{\leq}(l,k)$  the set of all $l \times k$ row substochastic matrices and by $\mathcal{M}_{=}(l,k)$ the set of all row stochastic matrices ($i.e.$ for which every row adds up to exactly $1$). In the following, we denote by $\A(s)$ the set of actions available from $s\in \{1,...,n\}$ and we assume without loss of generality\footnote{If not we simply duplicate the actions.} that for all $a \in \mathcal{A}$, there exists a unique $s$ such that $a \in \mathcal{A}(s)$. We denote by $\A^{-1}(s)$ the set of actions that lead to $s$ i.e. $\A^{-1}(s):=\{a: P(a,s)>0\}$. 

We can associate a directed bipartite graph $G=(\S,\A,E)$ with $(\mathcal{S}, \mathcal{A}, J, P)$ by defining $E:=\{(s,a): s\in \S\setminus\{0\},a\in \A\setminus\{0\} \mbox{ with } J(a,s)=1\} \cup \{(a,s): s\in \S\setminus\{0\},a\in \A^{-1}(s)\} \cup \{(0_\S,0_\A),(0_\A,0_\S)\}$. $G$ is called the {\em support graph}. A {\em $\S$-walk} in $G$ is a sequence of vertices $(s_0,a_0,s_1,a_1,...,s_k)$ for some $k\in \mathbb N$ with $s_i\in \S$ for all $0\leq i\leq k$, $a_i\in \A$ for all $0\leq i\leq k-1$, $(s_i,a_i)\in E$ for all $0\leq i\leq k$, and $(a_{i-1},s_i) \in E$ for all $1\leq i \leq k$. $k$ is called the {\em length} of the walk. We denote by $W_k$ the set of all possible $\S$-walk of length $k$ and $W:=\cup_{k\in \setN} W_k$. A {\em policy} $\Pi$ is a function $\Pi : (k,w_k) \in \setN  \times W_k \mapsto \Pi_{k,w_k}\in \mathcal{M}_{=}(n,m)$ satisfying $\Pi_{k,w_k}(s,a)>0 \implies J(s,a)=1$ for all $s\in \{1,...,n\}$ and $a\in \{1,...,m\}$.  We say that a policy is {\em deterministic} if $\Pi_{k,W_k}$ is a 0/1 matrix for all $k$ and $w_k$, it is  {\em randomized} otherwise. If $\Pi$ is constant, we say that the policy is {\em stationary}, otherwise it is called {\em history-dependent}. A policy $\Pi$ induces a probability distribution over the (countable) set of all possible $\S$-walks. When $\Pi$ is stationary, we often abuse notation and identify $\Pi$ with a matrix. 

We let $y^\Pi_k \in \R_+^n$ be the substochastic vector representing the state of the system in period $k$ when following policy $\Pi$ (from an initial distribution $y^\Pi_0$). That is $y^\Pi_k(i)$ is the probability of being in state $i$, for all $i=1,...,n$ at time $k$ following policy $\Pi$. Similarly, we denote by $x_k^{\Pi} \in \mathbb{R}_+^m$ the substochastic vector representing the probability to perform action $j$, for all $j=1,...,m$, at time $k$ following policy $\Pi$. By the law of total probability, we have  $x_k^{\Pi}=\Pi^T \cdot y^\Pi_{k}$ for all $k\geq 0$. Similarly, given a stationary policy $\Pi$ and an initial distribution $y^\Pi_0$ at time $0$, we have ~: $y^\Pi_k=P^T x_{k-1}^{\Pi}=  P^T \cdot \Pi^T \cdot y^\Pi_{k-1}$ for all $k\geq 1$.  Hence the state of the system at time $k\geq 0$ follows $y^\Pi_k= (P^T \cdot \Pi^T)^k \cdot y^\Pi_0$. The value $c^T x_{k}^{\Pi}$ represents the expected cost paid at time $k$ following policy $\Pi$. One can define for each $i\in \S\setminus\{0\}$, $J_\Pi(i):=\limsup_{K\rightarrow +\infty}  \sum_{k=0}^{K} c^T x_{k}^{\Pi}$ with $y^\pi_0:=e_i$, and $J^*(i):=\min \{J_\Pi(i) : \Pi $ deterministic  and stationary policy$\}$\footnote{$\limsup$ is used here as the limit need not be defined in general.} ($e_{i}$ is the characteristic vector of $\{i\}$ i.e. the 0/1 vector with $e_i(j)=1$ iff $j=i$). Bertsekas and Tsistiklis \cite{BertsekasTsitsiklis} defined a stationary policy $\Pi^*$ to be {\em optimal}\footnote{note that it is not clear, a priori, that such a policy exists} if $J^*(i):=J_{\Pi^*}(i)$ for all $i\in \S\setminus \{0\}$. They introduced the {\em Stochastic Shortest Path Problem} as the problem of finding such an optimal stationary policy. 

\subsection*{Literature review}

The stochastic shortest path problem is a special case of Markov Decision Process  and it is also known as total reward undiscounted MDP \cite{bertsekasbook1,bertsekasbook2,puterman}.  It arises naturally in robot motion planning, from maneuvering a vehicle over unfamiliar terrain, steering a flexible needle through human tissue or guiding a swimming micro-robot through turbulent water for instance \cite{Robotmotion}. It has also many applications in operations research, artificial intelligence and economics: from inventory control, reinforcement learning to asset pricing (see for instance \cite{white93,Merton,bauerle2011markov,SuttonBarto}). SSP forms an important class of MDPs as it contains finite horizon MDPs, discounted MDPs (a euro tomorrow is worth less than a euro today) and average cost problems (through the so-called vanishing discounted factor approach) as special cases. It thus encapsulates most of the work on finite state/action MDPs. The stochastic shortest path problem was introduced first by Eaton and Zadeh in 1962 \cite{EatonZadeh} in the context of pursuit-evasion games and it was later studied thoroughly by Bertsekas and Tsitsiklis \cite{BertsekasTsitsiklis}.


MDPs were first introduced in the 50's by Bellman \cite{Bellman57} and Shapley \cite{Sha53} and they have a long, rich and successful history (see for instance \cite{bertsekasbook1,bertsekasbook2,puterman}). For most MDPs, it is known that there exists an optimal deterministic and stationary policy \cite{puterman}. Building upon this fact, there are essentially three ways of solving such problems exactly (and some variants)~: {\em value iteration} (VI), {\em policy iteration} (PI) and linear programming (LP). Value iteration and policy iteration are the original 50+ years old methods \cite{Bellman57,Howard}.  The idea behind VI is to approximate the infinite horizon problem with a longer and longer finite one. The solution to the $k$-period
approximation is built inductively from the optimal solution to the $(k-1)$-period problem using standard dynamic programing. The convergence of the method relies mainly on the theory of contraction mappings and Banach fixed-point theorem \cite{bertsekasbook1} for most MDPs.
PI is an alternative method that starts from a feasible deterministic and stationary policy and iteratively improves the action in each state so as to converge to an optimal solution. It can be interpreted as a simplex algorithm where multiple pivots are performed in each step \cite{Manne,Denardo}. As such it builds implicitly upon the geometry of the problem to find optimal solutions. Building explicitly upon this polyhedra, most MDPs can also be formulated as linear programs and as such they can thus be solved in (weakly) polynomial time \cite{Manne,depenoux,Denardo,Hordijk,Lasserre}.

In the context of the SSP, some hypothesis are required for standard methods and proof techniques to apply. Bertsekas and Tsitsiklis \cite{BertsekasTsitsiklis} introduced the notion of {\em proper} stationary policies~: a stationary policy $\Pi$ is said to be {\em proper} if ${\bf 1}^T (P^T \cdot \Pi^T)^n \cdot e_i < 1$ for all $i=1,...,n$, that is, after $n$ periods of time, the probability of reaching the target state is positive, from any initial state $i$. We say that such policies are {\em BT-proper} as we will introduce a slight generalization later. They proved that VI, PI and LP  still work when two assumptions hold, namely, when (i) there exists a BT-proper policy and (ii) any BT-improper policy $\Pi$ have at least one state $i$ for which $J_\Pi(i)=+\infty$. In particular they show that one can restrict to deterministic policies. Their assumptions naturally discriminate between BT-proper and BT-improper policies. Exploiting further the discrepency between these policies, Bertsekas and Yu \cite{Bertsekas16} showed that one can relax asumptions (i) and (ii) when the goal is to find an optimal {\em BT-proper} stationary policy. They could show that applying the standard VI and PI methods onto a perturbated problem where $c$ is modified to $c+\delta\cdot {\bf 1}$ with $\delta>0$ and letting $\delta$ tends to zero over the iterations, yields an optimal BT-proper solution if (j) there exists a BT-proper policy and $(jj)$ $J^*$ is real-valued. Moreover they could also show that the problem can still be formulated (and thus solved) using linear programming, which settles the (weak) polynomiality of this extension. Some authors from the AI community proposed alternative extensions of the standard SSP introduced by Bertsekas and Tsitsiklis. It is easy to see that the most general one, entitled Stochastic and Safety Shortest Path problem \cite{S3P}, is a special case of Bertsekas and Yu's framework (it is a bi-objective problem that can be easily model in this framework using artificial actions of prohibited cost).

The question of whether SSP, in its original form or the later generalization by Bertsekas and Yu, can be solved in strongly polynomial time\footnote{a polynomial in the number of states and the number of actions} is a major open problem for MDPs (see for instance \cite{Ye11}). It was proven in a series of breakthrough papers that it is the case for {\it fixed} discount rate (basically the same problem as before but where the transition matrix $P$ is such that there is a fixed non-zero probability of ending up in $0$ after taking any action). The result was first proved using interior point methods \cite{Ye05}  and then the same author showed that the original policy iteration method proposed by Howard was actually strongly polynomial too \cite{Ye11} (the analysis was later improved \cite{Hansen13}). The problem is still open for the undiscounted case but Policy Iteration is known to be exponential in that setting \cite{Friedmann}. In contrast, value iteration was proved to be exponential even for the discounted case \cite{FeinbergJeff14}. Because SSPs can be formulated as  linear programs, the question relates very much to the existence of strongly polynomial time algorithms for linear programming, a very long-lasting open problem that was listed as one of the 18 mathematical problems of the 21st century by Smale in 1998 \cite{Smale98}. A possible line of attack is to study simplex-type of algorithms but existence of such algorithms is also a long standing open problem and relates to the Hirsch conjecture on the diameter of polyhedra. These questions are central in optimization, discrete geometry and computational complexity. Despite the fact that SSP exhibits strong additional properties over general LPs, these questions are still currently out of reach in this setting, too.

In practice, value iteration and policy iteration are the methods of choice when solving medium size MDPs.
For large scale problems (i.e. most practical applications), approximate solutions are needed to provide satisfying solutions in a reasonable amount of time \cite{Powell}.  The field is known as Approximate Dynamic Programming and is a very active area of research. Most approximation methods are based on approximate versions of exact algorithms and developping new exact approaches is thus of great practical interest.

In this paper, we propose an extension of the frameworks of Bertsekas and Tsitsiklis \cite{BertsekasTsitsiklis} and Bertsekas and Yu \cite{Bertsekas16}. We prove in section \ref{sec:existence} that, in this setting, there is an optimal deterministic and stationary policy. Then we show in section \ref{sec:algo} that the standard Value Iteration and Policy Iteration methods converge, and we give an alternative approach that generalizes Dijkstra's algorithm when the costs are non negative.

\subsection{Notations and definitions}

Given a directed graph $G(V,E)$, and a set $S\subset V$, we denote by $\delta^+(S)$ the set of arcs $(u,v)$ with $u\in S$ and $v\not\in S$, and by $N^+(S)$ the set of vertices $v\in V$ such that $(u,v)\in E$ for some $u\in S$. For convenience when $S$ is a singleton, we denote $\delta^+(\{u\})$ by $\delta^+(u)$ and $N^+(\{u\})$ by $N^+(u)$. Then we define inductively $N_k^+(u):=N^+(N_{k-1}^+(u))\setminus N_{k-1}^+(u) \cup \ldots \cup N_0^+(u))$ for $k\geq 1$ integer with $N_0^+(u)=\{u\}$. We denote by $R^+(u)$ the set of vertices {\em reachable} from $u$ i.e. $R^+(u)=\bigcup_{k\geq 0} N_k^+(u)$. We can define $\delta^-(u)$, $N^-(u)$, $N_k^-(u)$ and $R^-(u)$ analogously. Clearly $v\in R^-(u)$ if and only if $u\in R^+(v)$. $R^-(u)$ are the vertices that can reach $u$. When confusion may arise, we denote $R^+(u)$ by $R_G^+(u)$ (and similarly for the other notations). We denote by $\mathbbm{1}_A$ the indicator function associated with a set $A$ i.e. $\mathbbm{1}_A$ is a 0/1 function with $\mathbbm{1}_A(x)=1$ if and only if $x\in A$. For a vector $x \in \mathbb{R}^d$ and $I\subseteq \{1,...,d\}$ we denote by $x[I]$ the restriction of $x$ to the indices in $I$ and $x(I):=\sum_{i\in I} x(i)$.

\section{Our new framework}\label{sec:ext}

We start with a simple observation whose proof can be found in the Appendix (see section \ref{sec:A1}).

\begin{lemma}\label{lem:well-def} \mbox{For BT-proper stationary policies, $\displaystyle \lim_{K\rightarrow +\infty} \sum_{k=0}^{K} x_{k}^{\Pi}$ is finite for any initial state distribution $y_0^\Pi$.}
\end{lemma}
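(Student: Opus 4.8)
The plan is to exploit the defining property of BT-proper policies, namely that after $n$ steps the probability of remaining in a non-target state strictly decreases, and to turn this into a geometric decay estimate on $\|y_k^\Pi\|_1$. Recall that for a stationary policy $\Pi$ we have $y_k^\Pi = (P^T \cdot \Pi^T)^k \cdot y_0^\Pi$, and that $x_k^\Pi = \Pi^T \cdot y_k^\Pi$. Since $\Pi$ is row stochastic, $\Pi^T$ maps the nonnegative orthant into itself and preserves the $\ell_1$ norm of nonnegative vectors, so $\|x_k^\Pi\|_1 = \mathbf{1}^T x_k^\Pi = \mathbf{1}^T y_k^\Pi = \|y_k^\Pi\|_1$. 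Hence it suffices to show that $\sum_{k\geq 0} \|y_k^\Pi\|_1$ converges, and the whole question reduces to bounding the total mass of the substochastic state vectors over time.

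First I would set $M := P^T \cdot \Pi^T$ and observe that $M$ is a nonnegative matrix whose columns sum to at most $1$ (it is the transpose of a row-substochastic matrix, since $\Pi \in \mathcal{M}_{=}(n,m)$ and $P \in \mathcal{M}_{\leq}(m,n)$ compose into a row-substochastic $n\times n$ matrix). The BT-proper assumption says exactly that $\mathbf{1}^T M^n e_i < 1$ for every $i\in\{1,\dots,n\}$, i.e. each column of $M^n$ sums to strictly less than $1$. Taking the maximum over the finitely many columns, set $\beta := \max_{i} \mathbf{1}^T M^n e_i < 1$. By linearity and nonnegativity, for any nonnegative vector $v$ we then get $\mathbf{1}^T M^n v \leq \beta\, \mathbf{1}^T v$. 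Since all $y_k^\Pi$ are nonnegative, iterating this contraction over blocks of $n$ steps yields $\|y_{qn}^\Pi\|_1 \leq \beta^{q}\,\|y_0^\Pi\|_1$ for every integer $q\geq 0$.

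Next I would control the intermediate terms within each block. Because $M$ is column-substochastic, $\mathbf{1}^T M v \leq \mathbf{1}^T v$ for nonnegative $v$, so the sequence $\|y_k^\Pi\|_1$ is nonincreasing; in particular, for $qn \leq k < (q+1)n$ we have $\|y_k^\Pi\|_1 \leq \|y_{qn}^\Pi\|_1 \leq \beta^q \|y_0^\Pi\|_1$. Grouping the sum into consecutive blocks of length $n$ then gives
\begin{equation}
\sum_{k=0}^{\infty} \|y_k^\Pi\|_1 \;=\; \sum_{q=0}^{\infty}\ \sum_{k=qn}^{(q+1)n-1} \|y_k^\Pi\|_1 \;\leq\; \sum_{q=0}^{\infty} n\,\beta^{q}\,\|y_0^\Pi\|_1 \;=\; \frac{n\,\|y_0^\Pi\|_1}{1-\beta},
\end{equation}
which is finite since $\beta<1$. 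As $\|x_k^\Pi\|_1 = \|y_k^\Pi\|_1$ and each $x_k^\Pi$ is nonnegative, termwise $\ell_1$ convergence of $\sum_k x_k^\Pi$ follows by comparison, so the limit $\lim_{K\to+\infty}\sum_{k=0}^{K} x_k^\Pi$ exists (and equals the $\ell_1$-absolutely convergent sum), establishing the claim for an arbitrary initial distribution $y_0^\Pi$.

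The only real subtlety is the block-decomposition bookkeeping: one must be careful that the contraction factor $\beta$ applies to the $n$-step map acting on the genuinely substochastic vectors $y_{qn}^\Pi$ rather than on normalized probability vectors, which is why I emphasize the nonnegativity and linearity of $\mathbf{1}^T M^n(\cdot)$ throughout. Everything else is routine, so I do not anticipate a serious obstacle; the argument is essentially the standard proof that a substochastic matrix with spectral radius below $1$ has a convergent Neumann series, specialized to the BT-proper hypothesis which guarantees the needed strict column-sum bound after $n$ steps.
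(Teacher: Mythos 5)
Your proof is correct, and it rests on the same key fact as the paper's: BT-properness gives the uniform bound $\beta:=\max_i \mathbf{1}^T (P^T\Pi^T)^n e_i<1$, which forces geometric decay of the state mass. The execution differs in a useful way, though. The paper writes the partial sums as $\bigl(\sum_{k=0}^{K}(\Pi^T P^T)^k\bigr)\Pi^T y_0$, asserts that $(\Pi^T P^T)^k\to 0$ ``by definition of BT-properness,'' and then invokes the Neumann-series lemma (Lemma \ref{lem:Q}) to get convergence to $(I-\Pi^T P^T)^{-1}\Pi^T y_0$. You instead bound $\|y_k^\Pi\|_1$ directly: the $n$-step column-sum contraction gives $\|y_{qn}^\Pi\|_1\leq\beta^q\|y_0^\Pi\|_1$, monotonicity of $\|y_k^\Pi\|_1$ handles the terms inside each block, and a geometric series finishes the argument by comparison. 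Your route is more elementary (no matrix invertibility or Neumann series needed) and, notably, it supplies exactly the step the paper leaves implicit --- namely \emph{why} the strict $n$-step column-sum bound implies the powers of $P^T\Pi^T$ (equivalently $\Pi^T P^T$) tend to zero. What you give up is the closed form $(I-\Pi^T P^T)^{-1}\Pi^T y_0$ for the limit, which the paper reuses later (e.g.\ in Lemma \ref{lem:flux} and in Policy Iteration, where $x^\Pi=(I-P_\Pi)^{-T}\mathbf{1}$ is needed as an explicit basic solution); your argument only establishes finiteness, which is all the lemma as stated asks for.
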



We now extend the notion of proper policies introduced by Bertsekas and Tsitsiklis using this alternative (relaxed) property and {\em from now on we will only use this new definition}. 

Given a state $s\in \{1,...,n\}$, a policy $\Pi$ is said to be {\em $s$-proper} if $\sum_{k\geq 0} x_{k}^{\Pi}$ is finite, when $y^\Pi_0:= e_{s}$. Observe that $\sum_{k\geq 0} y_{k}^{\Pi}$ is also finite for s-proper policies (as $y_{k}^{\Pi}=P^T x_{k-1}^{\Pi}$). In particular $\lim_{k\rightarrow +\infty} y^\Pi_{k}=0$, that is the policy lead to the target state $0$ with probability $1$ from state $s$.  The {\em $s$-stochastic-shortest-path problem }($s$-SSP for short) is the problem of finding a $s$-proper policy $\Pi$ of minimal cost $c^T \sum_{k\geq 0} x_{k}^{\Pi}$. We say that a policy is {\em proper} if it is $s$-proper for all $s$ and {\em improper} otherwise. The {\em stochastic shortest path problem} (SSP) is the problem of finding a proper policy $\Pi$ of minimal cost $c^T \sum_{k\geq 0} x_{k}^{\Pi} $ where $y^\Pi_0:=\frac{1}{n} \bf 1$. It is easily seen that the {stochastic shortest path problem}, as defined here, is also a special case of the $s$-SSP as one can add an artificial state with only one action that leads to all states in $\{1,...,n\}$ with probability $\frac{1}{n}$. In the following two sections, unless otherwise stated, we restrict to the $s$-SSP. In this context, we often abuse notation and we simply call proper a $s$-proper policy.


Since for any policy $\Pi$ (possibly history-dependent and randomized), $\Pi_{k,w_k}$ are stochastic matrices, we have at any period $k\geq 0$,  $\sum_{a \in \A(s)} x_k^{\Pi}(a) = y_k^{\Pi}(s)$. We also have $y_{k+1}^{\Pi}(s)=\sum_{a\in \A} p(s|a) x_k^{\Pi}(a)$ for all $s\in \{1,...,n\}$. In matrix form this is equivalent to $y_{k}^{\Pi} =  J^T x_k^{\Pi} $ and $y_{k+1}^{\Pi}= P^T x_k^{\Pi}$. This implies $J^T x_{k+1}^{\Pi} = P^T x_k^{\Pi}$ for all $k\geq 0$. We also have $J^T x_0^\Pi = e_s$. Now $x^\Pi:=\sum_{k=0}^{\infty}{x_{k}^{\Pi}}$ is well-defined for proper policies. Summing up the previous relations over all periods $k \geq 0$ we get $(J-P)^T {x^{\Pi}} = e_{s}$.   Hence the following linear program is a relaxation of the $s$-SSP problem\footnote{We would like to stress on the fact that the LP relaxation we consider here is almost (except for the right hand side) the standard LP formulation of the problem of finding an optimal deterministic and stationary policy and it was already known for quite some time for many special cases of SSP (see \cite{Bertsekas16} for instance). However while in the MDP community, the LP formulation comes as a corollary of other results, here we reverse the approach and introduce this formulation as a natural relaxation of the problem and we derive the standard results as (reasonably) simple corollaries. This is what allows to simplify, generalize and unify many results from the litterature. This is a simple yet major contribution of this paper. The notation and terminology is taken from \cite{Hansen}}
.

\begin{equation}\tag{$P_{s}$}
\begin{array}{ll}
\min & c^Tx \\
(J-P)^T x & = e_{s} \\
x & \geq 0
\end{array}
\end{equation}

Observe that for a deterministic problem (i.e. when $P$ is a 0/1 matrix), $(J-P)^T$ is the node-arc incidence matrix of a graph (up to a row) and the corresponding LP is the standard network flow relaxation of the deterministic shortest path problem. The vector $x$ is sometimes called a network {\em flux} as it generalizes the notion of network flow.


We call  a solution $x$ to $(J-P)^T x = 0,x\geq 0$ a {\em transition cycle} and the cost of such a transition cycle $x$ is $c^Tx$. Negative cost transition cycles are the natural extension of negative cost cycles for deterministic problems (actually we could also consider only the extreme rays of $(J-P)^T x = 0,x\geq 0$ ; we defer the discussion to the journal version of the paper).  One can check the existence of such objects by solving a linear program.

\begin{lemma}
One can check in (weakly) polynomial time whether a stochastic shortest path instance admits a negative cost transition cycle through linear programming.
\end{lemma}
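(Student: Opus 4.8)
The plan is to observe that the set of transition cycles is a polyhedral cone and to reduce the question to solving a single linear program. Formally, consider
\[
C := \{\, x \in \R^m : (J-P)^T x = 0,\ x \geq 0 \,\},
\]
which is by definition exactly the set of transition cycles. Since $x=0$ lies in $C$ with cost $c^T 0 = 0$, and since $C$ is a cone (closed under nonnegative scaling), the following dichotomy holds: either every $x \in C$ satisfies $c^T x \geq 0$, or there is some $\bar x \in C$ with $c^T \bar x < 0$, in which case $\lambda \bar x \in C$ for all $\lambda \geq 0$ and $c^T(\lambda \bar x) \to -\infty$. Hence the instance admits a negative cost transition cycle if and only if $\min\{ c^T x : x \in C \}$ is unbounded below.

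To turn this into a finite computation, I would instead solve the truncated program
\[
\min\{\, c^T x : (J-P)^T x = 0,\ {\bf 1}^T x \leq 1,\ x \geq 0 \,\}.
\]
Its feasible region is a polytope (it is closed, and ${\bf 1}^T x \leq 1$ together with $x \geq 0$ forces each coordinate into $[0,1]$, hence it is bounded) containing the origin, so the program always has a finite optimum of value at most $0$. By the scaling argument above, this optimal value is strictly negative if and only if $C$ contains a point of negative cost: any witness $\bar x \in C$ with $c^T \bar x < 0$ can be rescaled by $1/({\bf 1}^T \bar x)>0$ to a feasible point of negative cost, while conversely any negative-cost optimum already lies in $C$. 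It therefore suffices to solve this program and test the sign of its optimal value.

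It remains to observe that the whole procedure is weakly polynomial. The program has $m$ variables and $O(n+m)$ constraints, and all its data are sub-matrices and entries of the rational input $(J,P,c)$; hence its encoding length is polynomial in the input size. Since linear programs can be solved in time polynomial in the number of variables, the number of constraints, and the bit-size of the data (for instance via the ellipsoid method or interior-point methods), the program can be solved, its optimal value computed, and its sign compared to $0$, all in weakly polynomial time.

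The only point genuinely requiring care is the equivalence between ``a negative cost transition cycle exists'' and ``the optimal value of the truncated program is negative.'' This rests precisely on the homogeneity of the constraints defining $C$, which lets one scale any negative-cost witness down into the region ${\bf 1}^T x \leq 1$ while preserving negativity of the cost, and conversely scale any negative-cost optimum up into an unbounded ray of $C$. I expect no substantial obstacle beyond stating this scaling argument cleanly and noting the standard weak-polynomiality of linear programming.
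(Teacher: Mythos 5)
Your proof is correct and follows the approach the paper intends (the paper simply asserts that existence of a negative cost transition cycle can be checked by linear programming, leaving the details implicit). Your truncation of the cone by ${\bf 1}^T x \leq 1$ and the scaling equivalence supply exactly the missing details, so there is nothing to add.
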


We will prove in the sequel that the extreme points of $P_s:=\{x\geq 0 : (J-P)^T x  = e_{s}\}$ `correspond' to proper deterministic and stationary policies. Hence, when the relaxation $(P_s)$ has a finite optimum (i.e. when there is no transition cycle of negative cost and when a proper policy exists), this will allow to prove that, the $s$-SSP admits an optimal proper policy which is deterministic and stationary. This answers, for this problem, one fundamental question in Markov Decision Problem theory ``Under what conditions is it optimal to restrict to deterministic and stationary policies ?''  \cite{puterman}.


We can assume without loss of generality that there exists a path between all state node $i$ and $0$ in the support graph $G$. Indeed, if there is a node $i$ with no path to $0$ in  $G$, then no $s$-proper policy will pass through $i$ at any point in time (because then the probability of reaching the target state, starting from $i$, is zero, contradicting $\lim_{k\rightarrow +\infty} y^\Pi_{k}= 0$)  ; we could thus remove $i$ and the actions leading to $i$ and iterate. It is easy to see that under this assumption, there is always a $s$-proper policy. Indeed the randomized and stationary policy $\Pi$ that chooses an action in state $i$ uniformly at random among $\A(i)$ will work~: in this case, for each state $i$, there is in fact a non zero probability of choosing one of the paths from $i$ to $0$ after at most $n$ periods of time. 

\begin{lemma}\label{lem:enough}
Consider a $s$-SSP instance where there exists a path between all state node $i$ and $0$ in the support graph $G$. Then the policy that consists, for each state $i\in \S\setminus\{0\}$, in choosing uniformly at random an action in $\A(i)$ is a proper stationary policy.
\end{lemma}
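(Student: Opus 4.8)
The plan is to verify directly that the (randomized, stationary) uniform policy $\Pi$ is $s$-proper for \emph{every} $s$, i.e. that $\sum_{k\geq 0} x_k^\Pi$ is finite whenever $y_0^\Pi = e_s$. Write $M := P^T \cdot \Pi^T$, so that $y_k^\Pi = M^k e_s$ and $x_k^\Pi = \Pi^T y_k^\Pi = \Pi^T M^k e_s$. Since $\Pi^T$ is a fixed nonnegative matrix and all quantities are nonnegative, $\sum_{k\geq 0} x_k^\Pi = \Pi^T \sum_{k\geq 0} M^k e_s$, and it suffices to show that $\sum_{k\geq 0} M^k e_s$ converges. As $M\geq 0$ entrywise, this in turn reduces to bounding the total mass ${\bf 1}^T M^k e_s$, which is exactly the probability of still being in a non-target state at time $k$ when starting from $s$.

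First I would record that $M$ is column-substochastic, i.e. ${\bf 1}^T M \leq {\bf 1}^T$. Indeed $\Pi$ is row-stochastic, so ${\bf 1}^T \Pi^T = {\bf 1}^T$, and $P$ is row-substochastic, so ${\bf 1}^T P^T \leq {\bf 1}^T$; combining these gives ${\bf 1}^T M = {\bf 1}^T P^T \Pi^T \leq {\bf 1}^T$. In particular ${\bf 1}^T M^r e_i \leq 1$ for every $r\geq 0$ and every $i$. The core of the argument is then a \emph{uniform escape estimate}: there exists $\alpha > 0$ with ${\bf 1}^T M^n e_i \leq 1-\alpha$ for all $i\in\{1,\ldots,n\}$, i.e. from any initial state the probability of having reached $0$ within $n$ steps is at least $\alpha$.

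To establish this estimate I would use the hypothesis that every state reaches $0$ in $G$. Fix $i$ and take a shortest path from $i$ to $0$ in the support graph; since it visits distinct states and there are only $n+1$ states, it gives an $\S$-walk $(i=s_0,a_0,s_1,\ldots,s_\ell=0)$ with $\ell \leq n$. Each edge $(s_j,a_j)\in E$ means $a_j\in\A(s_j)$, so $\Pi$ picks $a_j$ in $s_j$ with probability $1/|\A(s_j)| > 0$, and each edge $(a_j,s_{j+1})\in E$ means $a_j\in\A^{-1}(s_{j+1})$, i.e. $p(s_{j+1}\mid a_j)>0$. Hence this particular walk has strictly positive probability under $\Pi$, which is a positive lower bound on the probability of reaching $0$ from $i$ within $n$ steps; taking $\alpha$ to be the minimum of these (finitely many) positive bounds over $i=1,\ldots,n$ yields $\alpha>0$ and ${\bf 1}^T M^n e_i \leq 1-\alpha$. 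I expect this translation of a path in the bipartite support graph into a bounded-length, positive-probability $\S$-walk — specifically the length bound $\ell\leq n$ and the uniformity of $\alpha$ across all states — to be the only delicate point.

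Finally I would iterate the escape estimate. Rewriting the bound componentwise as ${\bf 1}^T M^n \leq (1-\alpha){\bf 1}^T$ and using $M\geq 0$ together with column-substochasticity, induction gives ${\bf 1}^T M^{tn} \leq (1-\alpha)^t {\bf 1}^T$; and for $k = tn+r$ with $0\leq r < n$, one gets ${\bf 1}^T M^k e_i = {\bf 1}^T M^{tn} M^r e_i \leq (1-\alpha)^t {\bf 1}^T M^r e_i \leq (1-\alpha)^{\lfloor k/n\rfloor}$. Summing the resulting geometric series, $\sum_{k\geq 0} {\bf 1}^T M^k e_s \leq n \sum_{t\geq 0}(1-\alpha)^t = n/\alpha < \infty$, so $\sum_{k\geq 0} M^k e_s$ converges, and therefore so does $\sum_{k\geq 0} x_k^\Pi = \Pi^T \sum_{k\geq 0} M^k e_s$. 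As $s$ was arbitrary, $\Pi$ is $s$-proper for every $s$, that is, proper.
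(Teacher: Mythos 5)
Your proof is correct and follows essentially the same route as the paper: you show the uniform policy is BT-proper via a shortest path of length at most $n$ in the support graph (the paper's remark that ``there is a non zero probability of choosing one of the paths from $i$ to $0$ after at most $n$ periods of time''), and then deduce summability of $\sum_{k\geq 0} x_k^{\Pi}$, which the paper delegates to Lemma~\ref{lem:well-def} and you instead carry out explicitly with the geometric-series bound ${\bf 1}^T M^k e_i \leq (1-\alpha)^{\lfloor k/n\rfloor}$. The extra detail is welcome but not a different argument.
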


The discussion above also gives a simple algorithm for testing the existence of a proper policy for any instance of the SSP.

\begin{lemma}
One can check in time $O(|U|\cdot (|U|+|V|+|E|))$ whether a $s$-SSP instance with support graph $G=(U,V,E)$ admits a proper policy or not.
\end{lemma}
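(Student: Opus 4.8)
\noindent\emph{Proof plan.} The plan is to reduce the existence question to a single reachability computation in the support graph $G=(U,V,E)$ (with $U$ the states and $V$ the actions). The key claim I would establish is that a proper policy exists if and only if every state can reach the target, i.e. $U\subseteq R_G^-(0_\S)$.

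I would prove the two directions as follows. The ``if'' direction is exactly Lemma~\ref{lem:enough}: once every state node has a path to $0_\S$ in $G$, the uniform stationary policy is proper. For ``only if'' I would argue the contrapositive: if some state $i\in U\setminus\{0_\S\}$ has no path to $0_\S$ in $G$, then no positive-probability $\S$-walk joins $i$ to the target, so under \emph{any} policy $\Pi$ (even history-dependent and randomized) the walk started at $i$ reaches $0_\S$ with probability $0$. As already noted in the excerpt, $i$-properness forces reaching the target with probability $1$; hence $\sum_{k\ge 0}x_k^\Pi$ cannot be finite, $\Pi$ is not $i$-proper, and therefore not proper.

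Given the claim, the algorithm is a single reverse BFS/DFS from $0_\S$ that computes $R_G^-(0_\S)$ and tests $U\subseteq R_G^-(0_\S)$, costing $O(|U|+|V|+|E|)$ -- already within the stated bound. To match the bound exactly while simultaneously producing the cleaned-up instance used elsewhere, I would instead run the iterative pruning described just above the lemma: repeatedly delete a state with no path to $0_\S$ together with the actions leading into it, recomputing reachability each time. Since every productive round deletes at least one state, there are at most $|U|$ rounds, each a traversal in $O(|U|+|V|+|E|)$, for a total of $O\!\big(|U|\cdot(|U|+|V|+|E|)\big)$; the instance admits a proper policy if and only if no state is ever deleted.

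The step I expect to need the most care is the ``only if'' direction when the algorithm is phrased as the \emph{full} pruning, because a state deleted in a later round loses its actions only indirectly (all of them may lead into previously deleted states), and one must check it still cannot reach $0_\S$ almost surely under any policy. I would sidestep this for the existence question by noting that the pruning deletes anything at all if and only if it already deletes something in the first round (a fixpoint is reached as soon as a round deletes nothing), and a first-round deletion is precisely a state with no path to $0_\S$ in the original $G$ -- the case handled above. If the stronger statement about the reduced instance were genuinely required, I would derive it from the standard fact that the maximal reachability probability in a finite MDP is attained by a memoryless policy, combined with the observation that the surviving set is the largest sub-instance that is closed under its retained actions and from which $0_\S$ stays reachable.
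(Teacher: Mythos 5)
Your reduction to a single reachability test proves the right statement for the \emph{global} notion of properness ($s$-proper for every $s$), but the lemma is stated for a $s$-SSP instance, and in that context the paper explicitly redefines ``proper'' to mean ``$s$-proper''. For that question your key claim is false in the ``only if'' direction: from ``some state $i$ has no path to $0_\S$'' you conclude ``$\Pi$ is not $i$-proper, and therefore not proper'', but a policy can be $s$-proper without being $i$-proper when $i$ is never visited starting from $s$. Concretely, let $s$ have a single action leading to $0_\S$ with probability one and let $i$ be a state (unreachable from $s$, or reachable but avoidable) whose only action loops back to $i$: your test reports that no proper policy exists, yet every policy is $s$-proper. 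This is precisely why the paper's discussion removes $i$ \emph{together with the actions leading into $i$} and \emph{iterates}, and why the stated bound carries the factor $|U|$: the correct test is whether $s$ itself survives the pruning, not whether the pruning ever deletes anything. Your ``sidestep'' --- observing that the pruning deletes something iff it already deletes something in round one --- collapses the fixpoint computation to a single reverse BFS and therefore inherits exactly this error; deciding whether $s$ is eventually deleted genuinely requires running the iteration to its fixpoint.

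The direction you do handle is fine and transfers: if $s$ survives, every surviving state reaches $0_\S$ through surviving actions and every surviving action has all its successors surviving (any action with a deleted successor was itself deleted), so the uniform policy on the surviving sub-instance is $s$-proper by Lemma~\ref{lem:enough}, extended arbitrarily on deleted states, which are never visited from $s$. The step you postponed --- that a state deleted in a later round admits no policy reaching $0_\S$ almost surely, hence no $i$-proper policy --- is not optional for the $s$-SSP reading; it is the whole content of the converse. It can be done by induction on the deletion round (under any policy from such a state, either a deleted action is eventually used, which with positive probability lands in a state deleted strictly earlier, or the process is confined to a subgraph with no path to $0_\S$), or via the memoryless-attainment fact for maximal reachability probabilities that you mention only as a fallback.
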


We are now ready to introduce the new assumptions that we will use to study the stochastic shortest path problem. They are the very natural extensions of the standard assumptions for the deterministic shortest path problem.\\

\begin{assumption}\label{as:1}
\begin{shaded}
We consider $s$-SSP/SSP instances where :
\begin{itemize}
\item there exists a path between all state node $i$ and $0$ in the support graph $G$, and
\item there is no negative cost transition cycle.
\end{itemize}
\end{shaded}
\end{assumption}

As already observed, these assumptions can be checked in (weakly) polynomial time. Moreover, these assumptions  implies that $(P_s)$ has  a finite optimum (from standard LP arguments). Also Bertsekas and Yu's framework is a special case of our setting as in the presence of negative cost transition cycles, $J^*(i)$ is not real-valued for some state $i$\footnote{In order to prove this statement formally we shall prove that if $J^*(i)$ is real-valued for all $i$ then there is no negative cost transition cycle~: we can prove that when there exists a negative cost transition cycle, we can find one which is `induced' by  a (non proper) deterministic and stationary policy $\Pi$  and that all vertices $i$ on this cycle will have $J_\Pi(i)=-\infty$ ; for this, we need to extend our decomposition result to decompose transition cycles into extreme rays ; we leave the details for the journal version of the paper, but  the proof of Proposition \ref{lem:proper2} gives the flavor of this latter result.}. The main extension, with respect to Bertsekas and Yu, is that we allow for non-stationary proper policies in the first place.

\section{Existence of an optimal, deterministic and stationary policy}\label{sec:existence}

In this section, we will prove essential properties about $P_s:= \{x \geq 0: (J-P)^T x = e_{s}\}$. This will allow to prove that, under Assumption \ref{as:1}, we can restrict to optimal proper, deterministic and stationary policies. The following theorem can be seen as an extension of the {\em flow decomposition theorem} (see  \cite{networkflow}).


\begin{theorem}\label{th:decomp}
Let $x \in \mathbb{R}^m$ be a feasible solution of $(P_{s})$. In strongly polynomial time, one can find $1\leq k \leq m$,  $x_1, ..., x_k, x_c \in \mathbb{R}^m$, and $\lambda_1, ... , \lambda_k \in [0, 1]$ such that $x_1, ..., x_k$ are feasible solutions of $(P_{s})$, $x_c$ satisfies $(J-P)^Tx_c = 0, x_c\geq 0$, $\sum_{j = 1}^k{\lambda_j} = 1$ and $x = \sum_{j=0}^k{\lambda_j x_j} + x_c$. 
Moreover, the vectors $x_j$ are network flux corresponding to proper, deterministic and stationary policies, i.e. for all $j\in 1,\dots, k$, there exists a proper, deterministic and stationary policy $\Pi_j$ such that $x_j = x^{\Pi_j}$.
\end{theorem}

Before we can prove this theorem, we need a couple of useful lemmas and definitions. Let $G=(U,V,E)$ and $x \in \mathbb{R}^m$ be a  solution to $(J-P)^T x = e_s, x\geq 0$. Let $G_x$ be the subgraph of $G$ induced by the vertices in $\S \cup \A_x$ where $\A_x:=\{a\in \{1,...,m\}$ with $x(a) >0$. $G_x$ is called the {\em support graph of $x$ in $G$}. We denote by $E_x$ the set of edges of $G_x$. 

\begin{lemma}\label{lem:path}
There exists a path between all states reachable from $s$ in $G_x$ and $0_\S$. In other word, for all $i \in R_{G_x}^+(s)$, we have $i \in R_{G_x}^{-}(0_\S)$.
\end{lemma}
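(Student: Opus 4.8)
The plan is to read the equality $(J-P)^T x = e_s$ as a flow-conservation law on $G_x$ and then run a standard ``closed set'' argument by contradiction. Concretely, the $s'$-th row of the system states, for every state $s' \in \{1,\dots,n\}$,
\[
\sum_{a \in \A(s')} x(a) \;-\; \sum_{a \in \A} p(s'|a)\, x(a) \;=\; e_s(s'),
\]
i.e.\ the flow leaving $s'$ (through its available actions) minus the flow entering $s'$ (through transitions) equals $1$ at the source $s$ and $0$ elsewhere. The substochasticity of $P$ means that action $a$ routes its residual mass $1-\sum_{s'} p(s'|a)$ to the target, so $G_x$ carries an arc $(a,0_\S)$ exactly when $x(a)>0$ and this residual is positive; this is the only way mass, hence reachability, reaches $0_\S$. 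Suppose, for contradiction, that some $i \in R^+_{G_x}(s)$ does not reach $0_\S$, and set $T := R^+_{G_x}(i)$. Then $0_\S \notin T$, and $T$ is closed under the out-arcs of $G_x$.

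Next I would extract the structure of $T$. Writing $S_T := T \cap \S$ and $A_T := T \cap \A_x$, closure yields three facts: (a) the unique in-arc of any action comes from the single state in which it is available, so every $a \in A_T$ has its home state in $S_T$, and conversely every positive-flow action with home state in $S_T$ lies in $A_T$; (b) since $0_\S \notin T$, no action of $A_T$ can have positive residual (otherwise the arc $(a,0_\S)$ would force $0_\S \in T$), so each $a \in A_T$ is stochastic and sends all its mass into $S_T$; (c) hence, summing the conservation law over $s' \in S_T$, the out-flow $\sum_{a \in A_T} x(a)$ cancels exactly the part of the in-flow contributed by $A_T$, leaving
\[
-\,\mathrm{IN} \;=\; \textstyle\sum_{s' \in S_T} e_s(s') \;=\; \mathbbm{1}[s \in S_T], \qquad \mathrm{IN} := \sum_{\substack{a \notin A_T \\ x(a)>0}} x(a)\sum_{s' \in S_T} p(s'|a) \;\ge\; 0,
\]
where $\mathrm{IN}$ is the flow entering $S_T$ from actions outside $T$. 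As the left side is $\le 0$ and the right side is $\ge 0$, both must vanish: $\mathrm{IN}=0$ and $s \notin S_T$.

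Finally I would derive the contradiction from $s \notin S_T$. Since $i \in S_T$ we have $s \neq i$, and because $i \in R^+_{G_x}(s)$ there is a genuine $s$-to-$i$ walk in $G_x$; let $(u,v)$ be its first arc with $u \notin T$ and $v \in T$. If $v$ is an action, its only in-arc comes from its home state, so $u$ is that home state, which lies in $S_T \subseteq T$ by (a), contradicting $u \notin T$. If $v$ is a state, then $v \in S_T$ and $u$ is an action with $x(u)>0$ and $p(v|u)>0$, so $u$ contributes a strictly positive term to $\mathrm{IN}$, contradicting $\mathrm{IN}=0$. Either way we reach a contradiction, so every $i \in R^+_{G_x}(s)$ reaches $0_\S$, as claimed. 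The one genuinely delicate point is the bookkeeping in step (c): one must verify that the out-flow of $S_T$ is carried exactly by $A_T$ and that the $A_T$-part of the in-flow equals the very same quantity, so that the only surviving term is the external in-flow $\mathrm{IN}$; everything else is routine once the closure facts (a)--(b) are in place.
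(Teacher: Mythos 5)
Your proof is correct and follows essentially the same route as the paper: both read $(J-P)^Tx=e_s$ as a unit flow from $s$ to $0_\S$ on the support graph and derive a contradiction by summing the conservation equations over the out-reachable set of a bad state $i$, using the strict positivity of $x$ on $\A_x$ at the first arc entering that closed set. The only (cosmetic) difference is that the paper makes the flow explicit as an arc-valued vector $\bar{x}$ and applies the cut identity to $\delta^\pm(R^+(i))$, whereas you keep the state-level equations and do the action-node bookkeeping by hand in your step (c).
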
\begin{proof}
Let us define $\bar{x}\in \R^{|E_x|}$ as follows~: $\bar{x}((s',a)):=x(a)$ for all $a\in A_x$ and $s'$ the (unique) state with  $a\in \A(s')$, and $\bar{x}((a,s')):=P(a,s')\cdot x(a)$ for all $a\in A_x$, and $s'\in S$ such that $P(a,s)>0$. Observe that $\bar{x}$ is only defined on $E_x$  and that $\bar{x}>0$. Because $x$ is a feasible solution to $(P_s)$, $\bar{x}$ satisfies $\bar{x}(\delta_{G_x}^+(v)) - \bar{x}(\delta_{G_x}^-(v))= \mathbbm{1}_{\{s\}}(v) -  \mathbbm{1}_{\{0_\S\}}(v)$ for all $v\in G_x$ and $\bar{x} \geq 0$. It is thus a  unit $(s,0_\S)$-flow in $G_x$. Now let us assume that there exists $i\in R_{G_x}^+(s)$ with  $i \not\in R_{G_x}^{-}(0)$. Summing up all flow constraints over $v\in R^+(i)$, we get $\bar{x}(\delta^+(R^+(i))) - \bar{x}(\delta^-(R^+(i))) = \mathbbm{1}_{R^+(i)}(s)$ (we remove from now on the subscript $G_x$ in order not to overload the notation).  We have $\bar{x}(\delta^+(R^+(i)))=0$ by definition of $R^+(i)$. But then $\bar{x}(\delta^-(R^+(i))) +\mathbbm{1}_{R^+(i)}(s)= 0$. Since $\bar{x}(\delta^-(R^+(i)))\geq 0$, this implies $s\not\in R^+(i)$ and  $\bar{x}(\delta^-(R^+(i)))=0$. Now because $s\not\in R^+(i)$ and $s\in R^-(i)$ (by hypothesis), there is at least one arc of $E_x$ in $\delta^-(R^+(i))$ but this implies $\bar{x}(\delta^-(R^+(i)))>0$ as  $\bar{x} >0$, a contradiction.
\end{proof}

Given a proper, deterministic and stationary policy $\Pi$, we denote by $G_{\Pi}$ the subgraph of $G$ induced by the state vertices in $\S$ and the actions vertices in $\Pi$. Now let  $G^s_{\Pi}$ be the subgraph of $G_\Pi$ induced by the vertices in $R^+(s)$. $G^s_\Pi$ is called the {\em support graph} of $\Pi$ (it is easily seen that it corresponds to the subgraph induced by the states and actions that we might visit under policy $\Pi$). Because $\Pi$ is proper, $0_S$ is reachable from each state $i$ in $G^s_{\Pi}$. Let us denote by $\S'$ the state vertices in $G^s_{\Pi}$ and $\Pi(\S')$ the actions associated with $\S'$ in $\Pi$. We also denote by $P_{\S'}$ the restriction of $P$ to the rows in $\S'$ and the columns in $\Pi(\S')$ ($P_{\S'}$ is a $|\S'|\times |\S'|$ matrix). Following the same arguments as in Section \ref{sec:A1}, $\lim_{k\rightarrow +\infty} (P_{\S'})^k = 0 $ and thus $(I_{\S'}-P_{\S'})$ is invertible. Now observe that $(I_{\S'}-P_{\S'})^T x^\Pi[\Pi(S')]=e'_s$ for $x^\Pi:=\sum_{k=0}^{+\infty} x^\Pi_k$, with $y_0^\Pi:=e_s$ ($e'_s$ is the restriction of $e_s$ to the indices in $\S'$). Indeed $x^\Pi(a)=0$ for all $a\not\in\Pi(S')$ and thus $(I_{\S'}-P_{\S'})^T x^\Pi[\Pi(S')]=e'_s$ corresponds to the constraints of $(P_s)$ associated with the rows in $\S'$. We thus have the following lemma.

\begin{lemma}\label{lem:flux}
Given a proper, deterministic and stationary policy $\Pi$, the flux vector $x^\Pi$ associated with $\Pi$ and defined by $x^\Pi:=\sum_{k=0}^{+\infty} x^\Pi_k$, with $y_0^\Pi:=e_s$ satisfies $x^\Pi[\Pi(S')]= (I_{\S'}-P_{\S'})^{-T}  e'_s$ and $x^\Pi(a)=0$ for all $a\not\in \Pi(\S')$, with $\S'$,$\Pi(\S')$, $I_{\S'}$,$P_{\S'}$ and $e'_s$  defined as above.
\end{lemma}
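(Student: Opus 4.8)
The plan is to reduce the statement to the finite, absorbing Markov chain that the deterministic stationary policy $\Pi$ induces on the reachable states, and then to read off $x^\Pi[\Pi(\S')]$ as the unique solution of a small square linear system. I would first settle the easy half, $x^\Pi(a)=0$ for $a\notin\Pi(\S')$. Starting from $y_0^\Pi=e_s$, an induction on $k$ shows that $y_k^\Pi$ is supported on $\S'$: if $y_k^\Pi(i)>0$ then $i\in\S'$, the only action played in $i$ is $\Pi(i)\in\Pi(\S')$, and $y_{k+1}^\Pi$ is in turn supported on $N^+(\Pi(i))\subseteq\S'\cup\{0_\S\}$. Since $\Pi$ is deterministic, $x_k^\Pi(\Pi(i))=y_k^\Pi(i)$ and $x_k^\Pi$ vanishes on every other action, so summing over $k$ gives $x^\Pi(a)=0$ for all $a\notin\Pi(\S')$.

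For the main identity I would restrict the global flux equation $(J-P)^Tx^\Pi=e_s$ to the block of rows indexed by $\S'$. Because $\Pi$ is deterministic, each $i\in\S'$ is incident to exactly one action $\Pi(i)\in\Pi(\S')$ with $J(\Pi(i),i)=1$, so, under the bijection $i\mapsto\Pi(i)$ between $\S'$ and $\Pi(\S')$, the $\S'$-block of $J^T$ applied to $x^\Pi$ is just $x^\Pi[\Pi(\S')]$ and the $\S'$-block of $P^T$ is $P_{\S'}^T x^\Pi[\Pi(\S')]$ (here the vanishing of $x^\Pi$ off $\Pi(\S')$, proved above, is what lets me drop all the remaining columns). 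The right-hand side restricts to $e'_s$. Hence $x^\Pi[\Pi(\S')]$ solves $(I_{\S'}-P_{\S'})^T z=e'_s$.

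It then remains to invert $I_{\S'}-P_{\S'}$, which is the one genuinely analytic point and the place I expect the only real work. Following the argument of Section \ref{sec:A1} (already invoked in Lemma \ref{lem:well-def}), properness of $\Pi$ means every $i\in\S'$ reaches the absorbing target $0_\S$, so the substochastic chain on $\S'$ is transient and steadily loses mass to $0_\S$; this forces $\lim_{k\to\infty}(P_{\S'})^k=0$, hence $I_{\S'}-P_{\S'}$ is invertible with $\sum_{k\geq 0}(P_{\S'})^k=(I_{\S'}-P_{\S'})^{-1}$. Invertibility makes the solution of the restricted system unique, so $x^\Pi[\Pi(\S')]=(I_{\S'}-P_{\S'})^{-T}e'_s$, as claimed. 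Equivalently, one could compute it directly as $x^\Pi[\Pi(\S')]=\sum_k(P_{\S'}^T)^k e'_s=(I_{\S'}-P_{\S'})^{-T}e'_s$, bypassing the uniqueness step, but either way the crux is the spectral bound on $P_{\S'}$ supplied by properness.
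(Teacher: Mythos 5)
Your proposal is correct and follows essentially the same route as the paper: the paper's argument (given in the paragraph preceding the lemma) likewise observes that $x^\Pi$ vanishes off $\Pi(\S')$, restricts the constraints of $(P_s)$ to the rows in $\S'$ to obtain $(I_{\S'}-P_{\S'})^T x^\Pi[\Pi(\S')]=e'_s$, and inverts $I_{\S'}-P_{\S'}$ via $\lim_{k\to\infty}(P_{\S'})^k=0$, which follows from properness as in Section \ref{sec:A1}. You merely supply a little more detail (the induction showing $y_k^\Pi$ is supported on $\S'$, and the bijection $i\mapsto\Pi(i)$ used in the restriction), which is a faithful elaboration rather than a different proof.
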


The following Lemma is  easy to prove using similar flow arguments as in the proof of Lemma \ref{lem:path}.

\begin{lemma}\label{lem:sub}
Let $\Pi$ be a proper, deterministic and stationary policy. We have $G^s_\Pi=G_{x^\Pi}$. Moreover if $x\in P_{s}$ and $\Pi(S) \subseteq \A_x$, then $G^s_{\Pi}$ is a subgraph of $G_x$ and $x^\Pi(a) \geq x(a)$ for some $a\in \A_x$.
\end{lemma}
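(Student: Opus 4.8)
The plan is to dispatch the three assertions in order. The two graph statements are essentially bookkeeping about which actions carry positive flux, so I would settle those with the same flow reasoning as in Lemma~\ref{lem:path}; the inequality $x^\Pi(a)\ge x(a)$ is the substantive claim and I expect it to hinge on a single conserved quantity, namely the total flux that reaches the target.

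For the identity $G^s_\Pi=G_{x^\Pi}$, I would first pin down the support $\A_{x^\Pi}$ exactly. By Lemma~\ref{lem:flux}, $x^\Pi$ vanishes off $\Pi(\S')$ and on $\Pi(\S')$ equals $(I_{\S'}-P_{\S'})^{-T}e'_s=\sum_{k\ge 0}(P_{\S'}^T)^k e'_s$. Reading off the coordinate attached to a state $i\in\S'$ gives $x^\Pi(\Pi(i))=\sum_{k\ge 0}(P_{\S'}^k)_{s,i}$, which is strictly positive exactly when $i$ is reachable from $s$ inside $G^s_\Pi$; since every state of $\S'$ is reachable from $s$ by definition, this yields $\A_{x^\Pi}=\Pi(\S')$. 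Because each action has a unique home state and, by the same reachability, every transition leaving an action of $\Pi(\S')$ lands in $\S'\cup\{0_\S\}\subseteq R^+(s)$, the edges of $G$ incident to $\Pi(\S')$ are precisely those of $G^s_\Pi$, so the two support graphs carry the same action vertices and the same edges and therefore coincide (the only nuance is the transient states outside $\S'$, which survive in $G_{x^\Pi}$ merely as isolated vertices). The inclusion under the hypothesis $\Pi(\S)\subseteq\A_x$ is then immediate: $G^s_\Pi$ lives on $\S'\cup\Pi(\S')$ with $\S'\subseteq\S$ and $\Pi(\S')\subseteq\Pi(\S)\subseteq\A_x$, and both $G^s_\Pi$ and $G_x$ are subgraphs of $G$ induced by their vertex sets, so every edge of $G^s_\Pi$ is retained in $G_x$.

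For the inequality I would introduce, for each action $a$, the absorption weight $w(a):=1-\sum_{i=1}^n P(a,i)\ge 0$ (the probability of jumping directly to $0$) and the functional $f(z):=\sum_a w(a)z(a)$. Summing the $n$ equality constraints of $(P_s)$ and using $\sum_{i=1}^n J(a,i)=1$ shows $f(z)=\mathbbm{1}^T e_s=1$ for every feasible $z$, so in particular $f(x)=f(x^\Pi)=1$. Suppose, for contradiction, that $x^\Pi(a)<x(a)$ for all $a\in\A_x$; since $\A_{x^\Pi}=\Pi(\S')\subseteq\A_x$ this holds on $\Pi(\S')$, and because $f(x^\Pi)=\sum_{a\in\Pi(\S')}w(a)x^\Pi(a)=1>0$ some $a^*\in\Pi(\S')$ has $w(a^*)>0$. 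Comparing term by term,
\[ 1=f(x^\Pi)=\sum_{a\in\Pi(\S')}w(a)x^\Pi(a)<\sum_{a\in\Pi(\S')}w(a)x(a)\le\sum_{a}w(a)x(a)=f(x)=1, \]
a contradiction. Hence some $a\in\Pi(\S')\subseteq\A_x$ satisfies $x^\Pi(a)\ge x(a)$, and $x(a)>0$ because $a\in\A_x$.

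The hard part is this last step. The naive hope that $x^\Pi$ dominates $x$ coordinatewise on $\Pi(\S')$ is false, so the difficulty is to isolate the correct scalar invariant; recognizing that both fluxes deliver exactly one unit to the target — i.e. $f(x)=f(x^\Pi)=1$ — is precisely what converts an otherwise unstructured comparison into the one-line contradiction above. Once $\A_{x^\Pi}=\Pi(\S')$ is established the graph statements are routine, the only care being the harmless discrepancy caused by isolated transient states and the target vertex between the two induced subgraphs.
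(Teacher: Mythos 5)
Your proof is correct. The paper omits the proof of this lemma, saying only that it follows from ``similar flow arguments as in the proof of Lemma \ref{lem:path}'', and your argument is exactly of that kind: the identity $f(z)=\mathbf{1}^T(J-P)^Tz=1$ is flow conservation across the cut at the target vertex $0_\S$, and the resulting contradiction (together with your computation of $\A_{x^\Pi}=\Pi(\S')$ via Lemma \ref{lem:flux}, which settles the two graph claims up to harmless bookkeeping about isolated vertices) is precisely the cut-based argument the authors gesture at.
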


%

Before proving Theorem \ref{th:decomp}, we need a final Lemma.

\begin{lemma}\label{lem:proper}
Let $G=(U,V,E)$ be the support graph of a $s$-SSP instance and assume that there is a path from every state vertex $i$ to $0_\S$ in $G$. Then in time $O(|U|+|V|+|E|)$, one can find a proper, deterministic and stationary policy $\Pi$.
\end{lemma}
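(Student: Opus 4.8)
The plan is to reduce the whole construction to a single backward breadth-first search from $0_\S$ and then read a policy off the resulting distance labels. First I would run a reverse BFS in $G$ starting at $0_\S$: a state $s$ is settled at the layer at which it is first reached, recording its label $d(s)$ (the minimum number of transitions needed to get from $s$ to $0_\S$, collapsing each action into the edges it induces), and an action $a$ is processed as soon as one of the states it can lead to has been settled. Formally this computes $d(0_\S)=0$ and, for $s\neq 0_\S$, $d(s)=1+\min_{a\in\A(s)}\min_{s'\,:\,p(s'|a)>0} d(s')$. The hypothesis that every state has a path to $0_\S$ in $G$ guarantees $d(s)$ is finite, and in fact $d(s)\le n$, for every state $s$. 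Since the search touches each vertex and each edge a constant number of times, this step runs within the announced $O(|U|+|V|+|E|)$ budget.

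Second, I would define the policy directly from these labels: for each state $s\neq 0_\S$ let $\Pi(s)$ be any action $a\in\A(s)$ attaining the minimum above, i.e. an action possessing at least one successor $s'$ with $d(s')=d(s)-1$ (such an action exists by the definition of $d(s)$), and set $\Pi(0_\S)=0_\A$. This $\Pi$ selects exactly one action per state, so it is deterministic and stationary by construction, and the selected action can be stored during the BFS at no extra asymptotic cost.

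Third — and this is the only real content — I would prove that $\Pi$ is proper. The subtle point, and the main obstacle, is that an action may have \emph{several} successors, so it is tempting but \emph{wrong} to demand that \emph{all} successors of the chosen action already reach $0_\S$; this fails on simple self-loops (an action from $s$ that reaches $0_\S$ or returns to $s$ with equal probability is perfectly proper, yet has a non-settled successor). The observation that resolves this is that a single strictly decreasing branch suffices. Indeed, by construction every state $s$ admits, under $\Pi$, a successor of label $d(s)-1$, so iterating produces a sequence $s=s_0,s_1,\dots,s_{d(s)}=0_\S$ with $p(s_{i+1}\mid \Pi(s_i))>0$ and $d(s_{i+1})=d(s_i)-1$. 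This is a positive-probability walk to $0_\S$ of length $d(s)\le n$, whose probability is at least $p_{\min}^{\,n}$, where $p_{\min}$ is the smallest positive entry of $P$. Hence from \emph{any} state the $\Pi$-induced chain reaches $0_\S$ within $n$ steps with probability at least $\varepsilon:=p_{\min}^{\,n}>0$, so the probability of not being absorbed after $kn$ steps is at most $(1-\varepsilon)^k$. Therefore $y_k^\Pi\to 0$ geometrically and $\sum_{k\ge 0} y_k^\Pi<\infty$, which combined with $x_k^\Pi=\Pi^T y_k^\Pi$ yields $\sum_{k\ge 0} x_k^\Pi<\infty$ for every initial distribution $e_s$. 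Thus $\Pi$ is $s$-proper for every $s$, i.e. proper, and since all three steps are dominated by the single reverse BFS, the total running time is $O(|U|+|V|+|E|)$, as claimed.
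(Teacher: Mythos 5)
Your proof is correct and follows essentially the same route as the paper: a backward BFS from $0_\S$ that retains, for each state, an action lying on a shortest (arc-count) path to the target, i.e.\ a $0$-anti-arborescence. The only cosmetic difference is that you establish properness of the resulting deterministic policy directly via the geometric absorption bound, whereas the paper delegates that step to Lemma~\ref{lem:enough} (whose justification is the very same ``positive probability of reaching $0$ within $n$ periods'' argument).
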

\begin{proof}
We know that, $0\in R^+(i)$ for all $i$, is enough to ensure that there is a proper policy by Lemma \ref{lem:enough}. Now if there exists a state vertex $i$ in $G$ with $|\A(i)|>1$, we can  delete from $G$ an action in $\A(i)$ that does not remove $0$ from $R^+(i)$. Such an action exists as it is enough to keep an action $a\in A(i)$ with minimum distance to $0$ (in terms of arc) to ensure that $0$ is still in $R^+(i)$ after deletion (by  minimality of the distance to $0$, such an action has a directed path to $0$ that does not go through $i$). If $|\A(i)|=1$ for all $i$ then the only possible policy is proper (from Lemma \ref{lem:enough}), deterministic and stationary. We can implement such a procedure in time $O(|U|+|V|+|E|)$ by computing $N^-_k(0)$ for all $k\leq |U|+|V|$ and a  $0$-anti-arborescence $A$ using a breadth first search algorithm~: we then keep only the actions in $A$. \end{proof}

We are now ready to prove the main Theorem of this section.

\begin{theorem}\label{th:decomp}
Let $x \in \mathbb{R}^m$ be a feasible solution of $(P_{s})$. In strongly polynomial time, one can find $1\leq k \leq m$,  $x_1, ..., x_k, x_c \in \mathbb{R}^m$, and $\lambda_1, ... , \lambda_k \in [0, 1]$ such that $x_1, ..., x_k$ are feasible solutions of $(P_{s})$, $x_c$ satisfies $(J-P)^Tx_c = 0, x_c\geq 0$, $\sum_{j = 1}^k{\lambda_j} = 1$ and $x = \sum_{j=0}^k{\lambda_j x_j} + x_c$. 
Moreover, the vectors $x_j$ are network flux corresponding to proper, deterministic and stationary policies, i.e. for all $j\in 1,\dots, k$, there exists a proper, deterministic and stationary policy $\Pi_j$ such that $x_j = x^{\Pi_j}$.
\end{theorem}
\begin{proof}
We prove first that such a decomposition exists for any $x\in P_s$. Let $x$ be a smallest counter-example (in terms of $|A_x|$). Because $x$ is a feasible solution of $(P_{s})$, we know by Lemma \ref{lem:path} that there exists a path between all states reachable from $s$ in $G_x$ and $0$. Now from Lemma \ref{lem:proper}, we know that there exists a proper, deterministic and stationary policy $\Pi$ to which we can associate and compute a flux $x^\Pi$ using Lemma \ref{lem:flux}. Let $\lambda\geq 0$ be the maximum value such that $x':=x-\lambda x^\Pi\geq 0$.  By Lemma \ref{lem:sub} we have that $G_{x^\Pi}$ is a subgraph of $G_x$ and thus $\lambda>0$ (as $x>0$ on $\A_x$). We also have $\lambda\leq 1$ by the same Lemma. Moreover by maximality of $\lambda$, there is an arc $a\in \A_x$ such that $x(a)>0$ and $x'(a)=0$. Hence $\A_{x'}\subset \A_x$. If $\lambda=1$, $x'$ is a solution to $(J-P)^T x = 0,x\geq 0$ and $x:=x^\Pi+x'$ provides a decomposition for $x$, a contradiction. Else, $\frac{1}{1-\lambda} x'$ is a solution to $(P_s)$ with $|\A_{x'}|<|\A_{x}|$. By minimality of the counter-example, we can assume that there exists a decomposition for $\frac{1}{1-\lambda} x'$. Now we can get a decomposition for $x$ from the decomposition for $\frac{1}{1-\lambda} x'$ by scaling the multipliers by 1-$\lambda$ and using $x^\Pi$ with multiplier $\lambda$, this is contradiction. Clearly, we can make the proof algorithmic and because $A_{x'}\subset A_x$ at each iteration, the algorithm will terminate in at most $|A_x|$ steps with a set of $k\leq |A_x|$ solutions $x_1,....,x_k$ to $(P_{s})$ and a vector $x_c$ satisfying the theorem. 
\end{proof}

\begin{corollary}\label{cor:detstat}
Under Assumption \ref{as:1}, the $s$-SSP admits an optimal proper, deterministic and stationary policy.
\end{corollary}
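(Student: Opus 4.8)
The plan is to reduce the corollary to the structural decomposition of Theorem~\ref{th:decomp}, via three ingredients: that the relaxation $(P_s)$ attains a finite optimum under Assumption~\ref{as:1}, that an optimal solution of $(P_s)$ decomposes into proper deterministic stationary fluxes at least one of which is itself LP-optimal, and that $(P_s)$ is a faithful relaxation, so that this distinguished flux yields an optimal $s$-proper policy.

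First I would establish that $(P_s)$ has a finite optimum. Feasibility follows from the first bullet of Assumption~\ref{as:1} together with Lemma~\ref{lem:enough}: there is a proper stationary policy $\Pi$, and summing the period-by-period relations $J^T x_{k+1}^{\Pi} = P^T x_k^{\Pi}$ and $J^T x_0^\Pi = e_s$ as in the text gives $(J-P)^T x^\Pi = e_s$ with $x^\Pi \geq 0$, so $x^\Pi \in P_s \neq \emptyset$. Boundedness below follows from the second bullet: every recession direction of $P_s$ is a vector $x_c \geq 0$ with $(J-P)^T x_c = 0$, i.e. a transition cycle, and $(P_s)$ fails to be bounded along it only if $c^T x_c < 0$, which is precisely a negative cost transition cycle, forbidden here. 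Hence $(P_s)$ is feasible and bounded below, so it attains a finite optimum at some $x^\ast \in P_s$.

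Next I would apply Theorem~\ref{th:decomp} to $x^\ast$, writing $x^\ast = \sum_{j=1}^k \lambda_j x_j + x_c$ with each $x_j = x^{\Pi_j}$ for a proper deterministic stationary policy $\Pi_j$, with $\lambda_j \in [0,1]$, $\sum_j \lambda_j = 1$, and $(J-P)^T x_c = 0$, $x_c \geq 0$. The cost then splits as $c^T x^\ast = \sum_j \lambda_j\, c^T x_j + c^T x_c$. Since $x_c$ is a transition cycle and there are no negative ones, $c^T x_c \geq 0$; since each $x_j \in P_s$ and $x^\ast$ is optimal, $c^T x_j \geq c^T x^\ast$. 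Substituting and using $\sum_j \lambda_j = 1$ yields $c^T x^\ast \geq \sum_j \lambda_j\, c^T x^\ast = c^T x^\ast$, so every inequality is tight: $c^T x_c = 0$ and $c^T x_j = c^T x^\ast$ for every $j$ with $\lambda_j > 0$. Fixing any such index $j$, the flux $x_j = x^{\Pi_j}$ attains the optimum of $(P_s)$.

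Finally I would transfer optimality from the LP to the $s$-SSP. Because every $s$-proper policy $\Pi$ yields a feasible $x^\Pi \in P_s$ with identical cost $c^T x^\Pi = c^T \sum_{k\geq 0} x_k^\Pi$, the optimum of $(P_s)$ is a lower bound on the $s$-SSP optimum. The policy $\Pi_j$ identified above is proper, hence $s$-proper, and its $s$-SSP cost equals $c^T x^{\Pi_j} = c^T x_j$, which is exactly the LP optimum and therefore meets this lower bound; thus $\Pi_j$ is an optimal $s$-proper policy, and it is deterministic and stationary by construction. I expect the only delicate point to be the bookkeeping in the third paragraph, namely checking that the convex-combination inequality forces $c^T x_j = c^T x^\ast$ rather than merely $\min_j c^T x_j \le c^T x^\ast$, together with stating the two-sided relaxation argument in the correct direction; the genuinely hard content, the existence of a \emph{deterministic and stationary} proper flux realizing the LP value, is already supplied by Theorem~\ref{th:decomp}.
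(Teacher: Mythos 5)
Your proof is correct, and it hinges on the same key result as the paper's proof, Theorem~\ref{th:decomp}, but it extracts the policy from the decomposition by a genuinely different (if equally standard) LP argument. The paper applies the decomposition to an optimal solution that is an \emph{extreme point} of $P_s$: extremality forces $x_c=0$ and collapses the convex combination, so the optimal vertex is literally $x^{\Pi}$ for a single proper, deterministic and stationary policy $\Pi$. You instead decompose an \emph{arbitrary} optimal $x^\ast$ and run a cost-averaging argument, using $c^T x_c \geq 0$ (no negative cost transition cycle) and $c^T x_j \geq c^T x^\ast$ (feasibility of each $x_j$) to force $c^T x_j = c^T x^\ast$ for every $j$ with $\lambda_j>0$; your tightness bookkeeping is right. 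The trade-off: you invoke the second bullet of Assumption~\ref{as:1} twice (once for boundedness of the LP, once to dispose of $x_c$) where the paper invokes vertex extremality once; in exchange you avoid the extreme-point selection entirely and you spell out the feasibility and boundedness of $(P_s)$, which the paper only asserts as following ``from standard LP arguments.'' The final transfer from LP optimality to $s$-SSP optimality is the same in both proofs. No gaps.
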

\begin{proof}
We know from linear programming that when a LP has finite optimum, we can find an optimal solution in an extreme point. For $(P_s)$ this is guaranteed by Assumption \ref{as:1}. But an extreme point $x$ of $P_s$ cannot be expressed as a convex combination of other points of $P_s$ by definition. As such, using Theorem \ref{th:decomp}, $x$ must be equal to $x^{\Pi}$ for some proper, deterministic and stationary policy $\Pi$. Now $c^T x^{\Pi}$ is precisely the cost of policy $\Pi$. Hence we have a feasible solution to our original problem which is optimal for the linear relaxation $(P_s)$. It is thus optimal for the original problem.
\end{proof}

We can deduce from what preceeds a result which is standard for the deterministic shortest path problem~: {\em Bellman optimality conditions}. 

\begin{lemma}\label{lem:Bellman}
Let $\Pi$ be an optimal proper, deterministic and stationary solution to the $s$-SSP (under Assumption \ref{as:1}). Let $G^s_\Pi$ be the support graph of $\Pi$. For all state vertex $i$ in $G^s_\Pi$, $\Pi$ is optimal for $i$-SSP.
\end{lemma}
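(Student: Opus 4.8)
The plan is to argue by contradiction and to stay entirely at the level of fluxes and the relaxations $(P_s)$ and $(P_i)$ (the latter being $(P_s)$ with $e_s$ replaced by $e_i$). Fix a state vertex $i$ of $G^s_\Pi$ and write $x^{\Pi,i}:=\sum_{k\ge 0} x^\Pi_k$ for the flux of $\Pi$ when computed with $y^\Pi_0:=e_i$ (so that the flux of the statement, started from $s$, is $x^\Pi=x^{\Pi,s}$). First I would record that $\Pi$ is itself $i$-proper: since $0_\S$ is reachable from $i$ in $G^s_\Pi$ and $\lim_{k} (P_{\S'})^k=0$ (as noted before Lemma \ref{lem:flux}), the policy reaches $0$ from $i$ with probability $1$, and $x^{\Pi,i}$ is a finite, feasible point of $(P_i)$ with $(J-P)^T x^{\Pi,i}=e_i$ and cost $c^Tx^{\Pi,i}=J_\Pi(i)$. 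By Corollary \ref{cor:detstat} the optimum of $(P_i)$ equals $J^*(i)$, hence $J_\Pi(i)\ge J^*(i)$, and it suffices to rule out strict inequality.

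Next I would suppose $J_\Pi(i)>J^*(i)$ and use Corollary \ref{cor:detstat} to pick a proper, deterministic, stationary policy $\Pi'$ optimal for $i$-SSP, with flux $x^{\Pi',i}$ satisfying $(J-P)^T x^{\Pi',i}=e_i$ and $c^Tx^{\Pi',i}=J^*(i)$. The key structural ingredient is a regeneration (first-passage) decomposition of $x^\Pi$ at $i$. Let $q$ be the probability that the trajectory of $\Pi$ started at $s$ ever visits $i$. Because $i\in R^+_{G^s_\Pi}(s)$, this event has positive probability, so $q>0$. Applying the strong Markov property at the first visit to $i$ (after which, under the stationary policy $\Pi$, the process evolves as $\Pi$ started afresh from $i$), the expected action counts split as $x^\Pi = x^B + q\,x^{\Pi,i}$, where $x^B$ collects the expected counts accumulated strictly before the first visit to $i$. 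In particular $x^B:=x^\Pi-q\,x^{\Pi,i}\ge 0$ componentwise, since taboo counts never exceed total counts.

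With this decomposition in hand, I would set $x^{\mathrm{new}}:=x^B+q\,x^{\Pi',i}$ and verify it is a strictly cheaper feasible point of $(P_s)$. Nonnegativity is immediate from $x^B\ge 0$, $q\ge 0$ and $x^{\Pi',i}\ge 0$. For the equality constraint, subtracting $q$ times the constraint of $x^{\Pi,i}$ from that of $x^\Pi$ gives $(J-P)^Tx^B=e_s-q\,e_i$, whence $(J-P)^Tx^{\mathrm{new}}=(e_s-q\,e_i)+q\,e_i=e_s$. Finally $c^Tx^{\mathrm{new}}=c^Tx^B+q\,c^Tx^{\Pi',i}=c^Tx^\Pi-q\big(J_\Pi(i)-J^*(i)\big)<c^Tx^\Pi$, using $q>0$ and $J_\Pi(i)>J^*(i)$. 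Since $\Pi$ is optimal for $s$-SSP, Corollary \ref{cor:detstat} makes $x^\Pi$ optimal for $(P_s)$, so a cheaper feasible point is a contradiction. Therefore $J_\Pi(i)=J^*(i)$ and $\Pi$ is optimal for $i$-SSP.

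The step I expect to demand the most care is the regeneration identity $x^\Pi=x^B+q\,x^{\Pi,i}$ together with $x^B\ge 0$. It is the first-passage decomposition of expected visit counts, intuitively clear but requiring the strong Markov property for the transient chain induced by $\Pi$ and a clean definition of the taboo flux $x^B$; finiteness of all the quantities involved (so that the rearrangement is legitimate) is exactly what properness of $\Pi$, guaranteed here by Assumption \ref{as:1} and the discussion around Lemma \ref{lem:flux}, provides, since the fluxes are nonnegative and $\sum_k x^\Pi_k$ converges. Everything else is linear-programming bookkeeping, and the positivity $q>0$, which forces the inequality to be strict, is precisely the reachability of $i$ from $s$ in the support graph $G^s_\Pi$.
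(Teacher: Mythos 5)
Your proof is correct, and it is at heart the same argument as the paper's: the paper constructs the history-dependent policy that follows $\Pi$ until the first visit to $i$ (if any) and then switches to an optimal $\Pi_i$ for $i$-SSP, and observes that this strictly improves on $\Pi$ because $i$ is reached with positive probability; your $x^{\mathrm{new}}=x^B+q\,x^{\Pi',i}$ is exactly the flux of that switching policy, and your $q>0$ is the paper's ``there exists a realization where $i$ is reached.'' The difference is one of formalization rather than of route: the paper derives a contradiction with the optimality of $\Pi$ among all (possibly history-dependent) proper policies, whereas you never construct the switching policy at all and instead exhibit a cheaper feasible point of $(P_s)$, contradicting the optimality of $x^\Pi$ for the relaxation (which follows from Corollary \ref{cor:detstat}). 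Your version makes explicit the first-passage decomposition $x^\Pi=x^B+q\,x^{\Pi,i}$ and the quantitative improvement $q\,(J_\Pi(i)-J^*(i))$ that the paper leaves implicit in a single sentence, at the cost of invoking the strong Markov property for the chain induced by the stationary policy $\Pi$; the paper's phrasing avoids that machinery but relies on the reader accepting that the spliced policy's value decomposes the same way. Both hinge on the same two facts: reachability of $i$ from $s$ in $G^s_\Pi$ (so $q>0$) and the existence, via Corollary \ref{cor:detstat}, of an optimal proper deterministic stationary policy for $i$-SSP.
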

\begin{proof}
Observe first that $i$-SSP satisfies Assumption \ref{as:1}. Now suppose $\Pi$ is not optimal for $i$-SSP. We know from Corollary \ref{cor:detstat} that  $i$-SSP admits an optimal proper, deterministic and stationary policy $\Pi_i$. Now the (history-dependent and non stationary) policy $\Pi'$ that consists in applying policy $\Pi$ to problem $s$-SSP, up to when state $i$ is reached (if it ever is) and then applying policy $\Pi_i$ is a proper policy. The value of this policy is better than the value of $\Pi$ as there exists a realization where $i$ is reached, a contradiction.
\end{proof}

%
%
%
%
%
\section{Algorithms}\label{sec:algo}

We focussed, up to now and without loss of generality, on the $s$-SSP problem. Bellman optimality conditions (i.e. Lemma \ref{lem:Bellman}) also tells us that, under Assumption \ref{as:1}, we can actually restrict attention to the SSP problem as well without loss of generality. Indeed we already observed that $SSP$ can be converted to a $s$-SSP problem by simply adding an artificial state $s$ and a unique action available from $s$ that lead to all states $i=1,...,n$ with probability $\frac{1}{n}$. Now there is a one-to-one correspondance between the policies of SSP and the policies of the auxiliary $s$-SSP problem and hence any proper, deterministic and stationary solution $\Pi$ to $SSP$ is optimal if and only if it is optimal for the auxiliary problem.  But by Lemma \ref{lem:Bellman}, an optimal policy $\Pi^*$ for SSP is optimal for $i$-SSP for all $i=1,...,n$ (as all $i$ are in $G'_{\Pi^*}$). It is easy to see that all theorems from the previous section extend naturally to the SSP setting. Of course, some definitions and results have to be slightly adapted~: for instance, the flux vector $x^\Pi$ associated with a proper deterministic and stationary policy is now $x^\Pi:=\sum_{k=0}^{+\infty} x^\Pi_k$ with $y_0^\Pi:=\frac{1}{n} \bf 1$ and it satisfies $x^\Pi= (I-P_{\Pi})^{-T}  \frac{1}{n} \bf 1$ (see Lemma \ref{lem:flux} for the previous relation), where $P_\Pi$ is the $n\times n$ matrix obtained from $P$ by keeping only the rows corresponding to actions in $\Pi$. For algorithmic reasons, it is more convenient to deal with the SSP problem as there is no problem of degeneracy~: the feasible basic solution $x^\Pi$ (it is indeed now the basic solution associated with the basis $(I-P_\Pi)^T$) has positive values on the actions in $\Pi$.  In this section, we will therefore focus on the SSP problem. The corresponding linear programming formulation is (in principle, the right hand side should be $\frac{1}{n} {\bf  1}$ but we simply rescaled it): 

\begin{equation}\tag{$P$}
\begin{array}{ll}
\min & c^Tx \\
(J-P)^T x & =  {\bf 1} \\
x & \geq 0
\end{array}
\end{equation}

One possible way of solving the previous model is to use any polynomial time algorithm for linear programming. This would lead to weakly polynomial time algorithms for SSP. As pointed out in the introduction, there are two standard alternatives for solving a MDP~: Value Iteration and Policy Iteration. We prove in the next two sections the convergence of these methods under Assumption \ref{as:1}. Then we give another new iterative method based on the standard primal-dual approach to linear programming~: this can be considered as a natural generalization of Dijkstra's algorithm.

\subsection{Value Iteration}

We denote by $\mathcal P$ the set of all  proper  policies for SSP. For all $i=1,...,n$, we define $V^*(i)$ to be the optimal value of $(P_i)$ (again under Assumptions \ref{as:1}), i.e. $V^*(i):={\min_{\Pi \in \mathcal P} c^T x^\Pi }$ with $y_0^\Pi=e_i$. This is refered to as the {\em value} of state $i$. We have in particular $V^*(i)= \min_{\Pi \in \mathcal P} \lim_{K \rightarrow +\infty} {\sum_{k = 0}^{K}{c^T x_k^{\Pi}}}$ by definition of $x^\Pi_k$. In the following,  we show that we can switch the $\min$ and $\lim$ operators with some care. We need first to introduce an auxiliary SSP instance obtained from $(\mathcal{S}, \mathcal{A},J,P,c)$ by adding an action of cost $M(i)$ for each state $i=1,...,n$  that lead to state $0$ with probability one, with $M(i)$ ``big enough'' . We call aux-SSP this auxiliary problem. Observe that in aux-SSP, there are proper policies that terminate in at most $k$ time periods for all $k\geq 1$, from any starting state. Indeed one can always chose an auxiliary action in period $k-1$.  Let us denote by $\mathcal P^k$ the proper policies in aux-SSP that terminate in at most $k$ steps and by $\mathcal P_{aux}$ the proper policies for aux-SSP.  Observe that $V_K(i):=\min_{\Pi \in \mathcal{P}^K} {\sum_{k = 0}^{K}{c^T x_k^{\Pi}}}$ is well-defined for each $K\geq 1$. In fact it is easy to prove by induction that it follows the dynamic programming formula~: $V_k(i)=\min\{V_{k-1}(i),\min_{a\in \A(i)} c(a) + \sum_{j} p(j|a) V_{k-1}(j)\}$ for all $k\geq 2$ and  $V_{1}(i)=M(i)$ for all $i=1,...,n$ (an optimal, deterministic non-stationary policy $\Pi^*_K$ can be recovered easily too)~: $V_k(i)$ is indeed the optimal value starting from $i$ among policies in $\mathcal P^k$.  The following result can be seen as an extension of Bellman-Ford algorithm for the deterministic shortest path problem. We give the proof in Section \ref{sec:A4}.

\begin{theorem}\label{th:VI}
For all $i=1,...,n$, if $M(i)\geq V^*(i)$, then we have $V^*(i)=\displaystyle   \lim_{K \rightarrow +\infty} V_K(i).$
\end{theorem}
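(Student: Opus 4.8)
The plan is to pin down $\lim_K V_K(i)$ by squeezing it between $V^*(i)$ from below and from above, one inequality at a time. First note that the dynamic programming recursion gives $V_k(i)\le V_{k-1}(i)$, so $(V_K(i))_K$ is non-increasing; together with the lower bound proved below this guarantees the limit exists, and the two one-sided estimates will force it to equal $V^*(i)$. Throughout I use that, under Assumption \ref{as:1}, $V^*(j)$ is finite for every $j$ and that every proper policy $\Pi$ satisfies $y_k^\Pi\to 0$.

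\emph{Lower bound:} $V_K(i)\ge V^*(i)$ for every $K$. Let $\Pi^*_K\in\mathcal P^K$ be the optimal (deterministic, non-stationary) policy attaining $V_K(i)$; since it terminates in at most $K$ steps, $V_K(i)=c^T x^{\Pi^*_K}$. This policy lives in aux-SSP and may use the auxiliary terminating actions. I would build a proper policy $\Pi'$ of the \emph{original} SSP started at $i$ by splicing: whenever $\Pi^*_K$ plays the auxiliary action at a state $j$, $\Pi'$ instead switches to an optimal proper deterministic stationary policy for $j$-SSP (which exists by Corollary \ref{cor:detstat} and costs exactly $V^*(j)$). Writing $q_j$ for the total probability that $\Pi^*_K$ plays the auxiliary action at $j$, the law of total probability yields
\[
c^T x^{\Pi'} = c^T x^{\Pi^*_K} + \sum_j q_j\big(V^*(j)-M(j)\big).
\]
The hypothesis $M(j)\ge V^*(j)$ for all $j$ makes each summand $\le 0$, so $c^T x^{\Pi'}\le c^T x^{\Pi^*_K}=V_K(i)$. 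Since $\Pi'$ is proper for the original SSP from $i$, its flux is feasible for $(P_i)$, hence $c^T x^{\Pi'}\ge V^*(i)$; combining the two gives $V_K(i)\ge V^*(i)$.

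\emph{Upper bound:} $\limsup_K V_K(i)\le V^*(i)$. Fix an optimal proper deterministic stationary policy $\Pi^*$ for $i$-SSP, so $c^T x^{\Pi^*}=V^*(i)$ and $y^{\Pi^*}_k\to 0$. For each $K$ consider the truncated policy $\tilde\Pi_K\in\mathcal P^K$ that follows $\Pi^*$ for the first $K-1$ periods and, in period $K-1$, forces termination on the surviving mass via the auxiliary actions. Its cost is
\[
c^T x^{\tilde\Pi_K}=\sum_{k=0}^{K-2} c^T x_k^{\Pi^*}+\sum_j M(j)\, y_{K-1}^{\Pi^*}(j),
\]
and, as $\tilde\Pi_K\in\mathcal P^K$, we have $V_K(i)\le c^T x^{\tilde\Pi_K}$. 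Letting $K\to\infty$, the first sum converges to $c^T x^{\Pi^*}=V^*(i)$ (the flux sum is finite by properness), while the second is at most $\big(\max_j M(j)\big)\,{\bf 1}^T y_{K-1}^{\Pi^*}\to 0$. Hence $\limsup_K V_K(i)\le V^*(i)$, and with the lower bound we conclude $\lim_K V_K(i)=V^*(i)$.

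The step I expect to be the main obstacle is making the splicing construction in the lower bound fully rigorous: one must check that $\Pi'$ is a legitimate (history-dependent) policy in the paper's sense, that it is proper for the original SSP (it reaches $0$ with probability one and has finite flux, being a finite splice of proper pieces), and—most delicately—that the cost identity holds, i.e. that each replacement of the auxiliary action at $j$ contributes expected continuation cost exactly $V^*(j)$ regardless of the history, so that these contributions aggregate to $\sum_j q_j V^*(j)$ by linearity of expectation. The remaining estimates are routine.
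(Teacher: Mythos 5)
Your proof is correct and follows essentially the same two-inequality strategy as the paper's: the upper bound by truncating an optimal proper stationary policy and terminating the surviving probability mass with the auxiliary actions (whose contribution vanishes since $y_K^{\Pi^*}\to 0$), and the lower bound by arguing that auxiliary actions of cost $M(j)\ge V^*(j)$ cannot improve on the true optimum. Your splicing construction merely makes explicit a step the paper asserts without detail (that auxiliary actions "can be assumed not to be used" in an optimal aux-SSP policy), so it is the same route, slightly more fleshed out.
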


Notice that it is easy to find initial values for $M(i)$ satisfying the previous Theorem. Indeed one can use $V^\Pi(i)$, the values for state $i$ when using policy $\Pi$ for any $i$-proper policy $\Pi$. We can actually  easily find a proper deterministic and stationary policy for SSP (i.e. for all $i$ simultaneously) by extending Lemma \ref{lem:proper} to SSP. 

The algorithm that consists in evaluating $V_k$ iteratively until $||V_{k} - V_{k-1}||_\infty$ is below some threshold is called {\em Value Iteration}. Value Iteration was already known to converge for SSP in the presence of transition cycles of cost zero, when initialized appropriately, see Bertsekas and Yu \cite{Bertsekas16}. 

We now explain how to recover an optimal proper, deterministic and stationary policy given the optimal vector $V^*$. Let us consider the dual linear program $(D)$ of $(P)$~:
 
\begin{equation}\tag{$D$}
\begin{array}{ll}
\max & {\bf 1}^T y \\
(J-P) y & \leq c  \\
\end{array}
\end{equation}

By definition of $V^*(i)$ and by Lemma \ref{lem:Bellman}, we know that $V^*$ satisfies $V^*(i)= \min_{a\in \A(i)} c(a) + \sum_{j} p(j|a) V^*(j)$ for all $i=1,...,n$. Also extending Corollary \ref{cor:detstat} to SSP, we know that there exists an optimal proper deterministic and stationary policy $\Pi^*$ with $V^*(i)=  c(\Pi^*(i)) + \sum_{j} p(j|\Pi^*(i)) V^*(j)$ for all $i=1,..,n$. In particular, $y^*:=V^*$ is feasible for $(D)$ and because the pair $(x^{\Pi^*}, y^*)$ satisfies the complementary slackness conditions, $y^*$ is optimal for $(D)$. 

Now let us reverse the complementary slackness conditions. An optimal solution $x^*$ to $(P)$ can have $x^*(a)>0$ only if $V^*(i)= c(a) + \sum_{j} p(j|a) V^*(j)$. Let $\A^*$ be the set of all such actions and let us restrict our instance of SSP to those actions in $\A^*$. Because there is an optimal proper, deterministic and stationary policy $\Pi^*$ for SSP and because such a policy must use only actions in $\A^*$, we know that there is a path from every state to the target state $0$ in the support graph $G^*=(U^*,V^*,E^*)$ of this instance. Now the stationary policy $\Pi'$ consisting in choosing uniformly at random an action in $\A^*(i)$ for each state $i$ is proper by Lemma \ref{lem:enough} and $x^{\Pi'}:=\lim_{K\rightarrow +\infty} \sum_{k=0}^K x_k^{\Pi'}$ is feasible for $(P)$. Observe that by construction, the pair $(x^{\Pi'},y^*)$ satisfies the complementary slackness conditions and thus $x^{\Pi'}$ is also optimal for $(P)$. Extending theorem \ref{th:decomp} to SSP and by optimality of $x^{\Pi'}$, we can decompose $x^{\Pi'}$ into a convex combination of vectors $x^{\Pi_j}$ associated with proper, deterministic and stationary policies $\Pi_j$. Again by optimality of $x^{\Pi'}$, $x^{\Pi_j}$ are also optimal for $(P)$. We can thus use  Theorem \ref{th:decomp} to get an optimal, proper, deterministic and stationary policy for the problem. In fact we can stop after the first application of  Lemma \ref{lem:proper} and thus get such a policy in time $O(|U^*|+|V^*|+|E^*|$).

N.B. We can define an approximate proper solution $\Pi_k$ at each step $k$ of Value Iteration by considering an approximate version of the complementary slackness theorem. We defer the discussion to the journal version of the paper.

\subsection{Policy Iteration}

An alternative to Value Iteration is to use a simplex algorithm to solve $(P)$. In order to do so we need an initial basis. We can use Lemma \ref{lem:proper} to find a proper deterministic and stationary policy $\Pi$. Then as we already observed, $x^\Pi=(I-P_\Pi)^{-T} \bf 1$ is a non-degenerate feasible basic solution of $(P)$. Because the basic solutions are non-degenerate, we can implement any pivot rule from this initial basic solution and the simplex algorithm will converge in a finite number of steps.  This type of algorithm is often referred to as {\em simple policy iteration} in the litterature. This proves that simple PI terminates in a finite number of steps. Unfortunately, most pivot rules are known to be exponential in $n$ and $m$ in the worst case \cite{melekopoglou1994complexity}.

In contrast with simple policy iteration, Howard's original policy iteration method \cite{Howard} changes the actions of a (basic) policy in each state $i$ for which there is an action in $\A(i)$ with negative {\em reduced cost}. We will prove now that this method converges under Assumptions \ref{as:1}. For this, we will prove that the method iterates over proper deterministic and stationary policies and that the cost is decreasing at each iteration. Given a proper deterministic and stationary policy $\Pi$,  $x^\Pi=(I-P_\Pi)^{-T} \bf 1$ is the basic feasible solution of $(P)$ associated with the basis $(I-P_\Pi)^{T}$. We define the {\em reduced cost} vector associated with $c$ and $\Pi$ as ${\bar{c}}^{\Pi}:= c - c_\Pi (I-P_\Pi)^{-T} (J-P)^T$ following linear programming (in order not to overload the notations we consider $c$ as a row vector in this section).  Let us denote by $\A^{>}(\Pi)$ the set of actions $a$ of $\A$ such that ${\bar{c}}^{\Pi} (a) < 0$. We know from linear programming that if ${\bar{c}}^\Pi(a) \geq 0$ for all $a$, then $x^\Pi$ (and thus $\Pi$) is optimal. If $\A^{>}(\Pi)\neq \emptyset$, then we can  swap actions in $\Pi$ with actions in $\A^{>}(\Pi)$ for each state where such an action exists. Let us denote by $\Pi'$ the resulting policy. The proof of the following proposition is given in Appendix \ref{sec:A2}.

\begin{proposition}\label{lem:proper2}
$\Pi'$ is proper and $c \cdot x^{\Pi'} < c \cdot x^\Pi$
\end{proposition}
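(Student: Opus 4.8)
The plan is to prove the two assertions in Proposition~\ref{lem:proper2} separately: first that the updated policy $\Pi'$ is proper, and then that its flux cost strictly decreases. The policy $\Pi'$ is obtained from $\Pi$ by swapping, in every state $i$ admitting an action $a\in\A^{>}(\Pi)$, the current action $\Pi(i)$ for such an improving action $a$; states with no improving action keep their current action. Throughout I would work with the SSP formulation $(P)$ and the reduced cost vector $\bar c^{\Pi}=c-c_\Pi(I-P_\Pi)^{-T}(J-P)^T$, recalling that $x^{\Pi}=(I-P_\Pi)^{-T}\mathbf{1}$ is a nondegenerate basic feasible solution (positive on the actions of $\Pi$).

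First I would establish properness of $\Pi'$. The natural route is a potential/monotonicity argument using $V^{\Pi}:=(I-P_\Pi)^{-1}c_\Pi$ (equivalently the dual values $y^{\Pi}$ read off from the basis, interpreting $c$ as a row vector). By definition of the reduced cost, for any action $a$ available in state $i$ one has $\bar c^{\Pi}(a)=c(a)+\sum_j p(j|a)V^{\Pi}(j)-V^{\Pi}(i)$, so an action with $\bar c^{\Pi}(a)<0$ strictly decreases the one-step Bellman value at $i$, while unswapped states satisfy equality. Writing $P_{\Pi'}$ and $c_{\Pi'}$ for the data of the new policy, this reads coordinatewise as $c_{\Pi'}+P_{\Pi'}V^{\Pi}\le V^{\Pi}$ with strict inequality in the swapped coordinates; equivalently $(I-P_{\Pi'})V^{\Pi}\le c_{\Pi'}$. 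I would then argue that this inequality forces $\lim_{k\to\infty}(P_{\Pi'})^k=0$: since the original support graph has a path from every state to $0$ and we only ever swap to actions available in the instance, one checks that $0$ remains reachable from every state under $\Pi'$ (this is where I would lean on Lemma~\ref{lem:enough} and the reachability structure), and reachability of $0$ from every state for a deterministic stationary policy is exactly properness. Thus $(I-P_{\Pi'})$ is invertible and $\Pi'$ is proper, so $x^{\Pi'}=(I-P_{\Pi'})^{-T}\mathbf{1}$ is well defined.

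For the strict cost decrease I would use the reduced-cost interpretation of the objective change together with nondegeneracy. The cost of the new policy can be written as $c\cdot x^{\Pi'}=c^T x^{\Pi}+\bar c^{\Pi}\cdot x^{\Pi'}$ (the standard simplex identity, valid because $\bar c^{\Pi}$ is the reduced cost relative to the basis of $\Pi$ and $x^{\Pi'}$ is feasible for $(P)$). The entering actions of $\Pi'$ are exactly those lying in $\A^{>}(\Pi)$, on which $\bar c^{\Pi}<0$, while on all other actions in the support of $x^{\Pi'}$ the reduced cost vanishes (the unswapped actions of $\Pi$ have reduced cost $0$). Hence $\bar c^{\Pi}\cdot x^{\Pi'}=\sum_{a\in\A^{>}(\Pi)}\bar c^{\Pi}(a)\,x^{\Pi'}(a)\le 0$, and I would show the inequality is strict by noting $x^{\Pi'}(a)>0$ for every action $a$ of $\Pi'$ (nondegeneracy of the new basic solution, since $\Pi'$ is proper by the first part), so at least one entering action receives strictly positive flux and contributes a strictly negative term. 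This yields $c\cdot x^{\Pi'}<c\cdot x^{\Pi}$.

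The main obstacle is the properness step. Unlike the cost inequality, which is essentially the textbook simplex reduced-cost computation, properness is genuinely where the SSP structure enters: swapping several actions simultaneously (Howard's full update, not single pivots) could a priori create a transition cycle that traps probability mass away from $0$, and one must rule this out. I expect the cleanest argument to be the potential inequality $(I-P_{\Pi'})V^{\Pi}\le c_{\Pi'}$ combined with Assumption~\ref{as:1} (no negative cost transition cycle): a cycle confined to the support of $\Pi'$ would, by summing the inequality around it, be forced to have nonpositive cost, and the strict-inequality coordinates make it negative, contradicting the no-negative-transition-cycle hypothesis; this simultaneously excludes recurrent classes avoiding $0$ and certifies $(P_{\Pi'})^k\to 0$. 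Making this cycle-summation rigorous for flux vectors (rather than ordinary directed cycles) is the delicate point, and it is precisely the kind of transition-cycle decomposition the paper flags as giving ``the flavor'' of a deferred result.
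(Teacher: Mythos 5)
Your second half (the strict cost decrease) is correct and is essentially the paper's own computation: writing $c=\bar c^{\Pi}+c_\Pi(I-P_\Pi)^{-T}(J-P)^T$ and using $(J-P)^Tx^{\Pi'}=(J-P)^Tx^{\Pi}$ gives $c\,x^{\Pi'}-c\,x^{\Pi}=\bar c^{\Pi}x^{\Pi'}$, which is strictly negative because $x^{\Pi'}>0$ on the actions of $\Pi'$ and at least one of these lies in $\A^{>}(\Pi)$. But that computation presupposes that $x^{\Pi'}$ exists, i.e.\ that $\Pi'$ is proper, and the properness half of your argument has a genuine gap.

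Your primary route to properness --- \emph{``since the original support graph has a path from every state to $0$ \dots one checks that $0$ remains reachable from every state under $\Pi'$''} --- does not work: Lemma~\ref{lem:enough} concerns the policy that randomizes over \emph{all} available actions, and reachability of $0$ in the full support graph says nothing about the one-action-per-state subgraph $G_{\Pi'}$. Indeed your reachability reasoning nowhere uses the strict negativity of the reduced costs, yet the paper points out (Fig.~\ref{fig:2}) that swapping actions with merely \emph{nonpositive} reduced cost can destroy properness even in a deterministic instance where every state reaches $0$ in the support graph; so no argument of that shape can succeed. Your fallback sketch (a trapped set would yield a negative transition cycle, contradicting Assumption~\ref{as:1}) is the right idea and is what the paper does, but two essential steps are missing from it. First, one must show that any vertex set $V$ of $G_{\Pi'}$ with $\delta^+(V)=\emptyset$ and $0_\S\notin V$ contains an action of $\A^{>}(\Pi)$ --- otherwise $V$ would already be trapped under $\Pi$, contradicting properness of $\Pi$; without this, summing your potential inequality over the trapped set only yields cost $\le 0$, which Assumption~\ref{as:1} does not forbid. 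Second --- and this is the step you yourself flag as unresolved --- one must actually exhibit a vector $\tilde x\ge 0$ with $(J-P)^T\tilde x=0$, supported in $V$ and strictly positive on such an action $a$. The paper constructs it by detaching $a$ from its state $s$, attaching it to an artificial source $s_0$, viewing $s$ as the target of an auxiliary $s_0$-SSP inside $V$, and taking the flux of $\Pi'$ for that auxiliary instance (so $\tilde x(a)=1$); this flux, read back in the original instance, is a transition cycle of cost $\sum_{a'}\bar c^{\Pi}(a')\tilde x(a')<0$. That construction is the heart of the proof and cannot be replaced by summing around an ordinary directed cycle, since the trapped dynamics are genuinely stochastic; as it stands, your proposal identifies the target contradiction but does not produce the object that realizes it.
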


Because we have a finite number of proper deterministic and stationary policy, we can conclude that Howard's policy iteration algorithm converges in a finite number of steps.

\begin{theorem}
Under Assumption \ref{as:1}, Howard's PI method converges in a finite number of steps.
\end{theorem}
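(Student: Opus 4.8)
The plan is to treat this as a short finiteness argument that rests almost entirely on Proposition \ref{lem:proper2}. First I would establish that the iteration is well-defined and stays within the right class of objects. By Lemma \ref{lem:proper}, under Assumption \ref{as:1} we can compute an initial proper, deterministic and stationary policy $\Pi^{(0)}$, and for such a policy $x^{\Pi^{(0)}} = (I-P_{\Pi^{(0)}})^{-T}\mathbf{1}$ is a (non-degenerate) basic feasible solution of $(P)$, so the reduced-cost vector ${\bar{c}}^{\Pi^{(0)}}$ is well-defined. The crucial invariant is that properness is preserved across iterations: if $\Pi^{(t)}$ is proper, deterministic and stationary and $\A^{>}(\Pi^{(t)})\neq\emptyset$, then the swapped policy $\Pi^{(t+1)}=(\Pi^{(t)})'$ is again proper (and trivially deterministic and stationary) by Proposition \ref{lem:proper2}. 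Hence by induction every iterate remains a proper, deterministic and stationary policy, so $x^{\Pi^{(t)}}$ is always a valid basic feasible solution of $(P)$ and Howard's rule can legitimately be applied at every step.

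Next I would record the two standard termination alternatives at each iteration. Standard linear programming optimality tells us that if ${\bar{c}}^{\Pi^{(t)}}(a)\geq 0$ for all $a$, i.e.\ $\A^{>}(\Pi^{(t)})=\emptyset$, then $x^{\Pi^{(t)}}$ is optimal for $(P)$ and we stop; this is exactly the termination criterion of the method. Otherwise $\A^{>}(\Pi^{(t)})\neq\emptyset$ and we perform the simultaneous swap to obtain $\Pi^{(t+1)}$. The second half of Proposition \ref{lem:proper2} then gives the key strict inequality
\[
c\cdot x^{\Pi^{(t+1)}} < c\cdot x^{\Pi^{(t)}}.
\]
Thus the sequence $\big(c\cdot x^{\Pi^{(t)}}\big)_{t\geq 0}$ of policy costs is \emph{strictly} decreasing as long as the method has not terminated.

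Finally I would invoke finiteness. There are only finitely many deterministic and stationary policies (at most $\prod_{i=1}^{n}|\A(i)|$ of them), hence only finitely many proper ones, and each determines its cost $c\cdot x^{\Pi}$ uniquely. Because the costs strictly decrease, no policy can ever be revisited, so the sequence $\Pi^{(0)},\Pi^{(1)},\dots$ cannot be infinite. Therefore the method must reach, after finitely many steps, a policy $\Pi^{(T)}$ with $\A^{>}(\Pi^{(T)})=\emptyset$, which by the optimality criterion above is an optimal proper, deterministic and stationary policy. This establishes convergence in a finite number of steps.

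I expect the genuine difficulty of this result to lie entirely in Proposition \ref{lem:proper2} and not in the theorem itself. The delicate point there is that Howard's method swaps actions in \emph{all} improving states simultaneously, so one must argue that the resulting policy is still proper (the absence of a directed path to $0_{\S}$ could, a priori, be created by a bad combination of individually-improving swaps) and that the cost strictly drops despite this multi-pivot. Once that proposition is granted, the theorem is a routine monotonicity-plus-finiteness argument, and the steps above can be carried out with no further obstacle.
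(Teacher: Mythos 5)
Your proof is correct and follows exactly the paper's argument: Proposition \ref{lem:proper2} guarantees that each iterate remains proper (hence a valid basic feasible solution) with strictly decreasing cost, and finiteness of the set of deterministic and stationary policies then forces termination. You are also right that the entire difficulty is concentrated in Proposition \ref{lem:proper2}, not in the theorem itself.
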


Observe that it is important not to change actions which are not strictly improving. Indeed, in this case it is easy to build deterministic examples where Lemma \ref{lem:proper2} fails (see for instance Fig. \ref{fig:2} in Section \ref{sec:A0}).  As for value iteration, prior to this work policy iteration was not known to converge in this setting. And again, as for VI, unfortunately Howard's Policy Iteration can be exponential in $n$ and $m$  \cite{Fearnley}.

%
%

\subsection{The Primal-Dual algorithm~: a generalization of Dijkstra's algorithm}

Primal-dual algorithms proved very powerful in the design of efficient (exact or approximation) algorithms in combinatorial optimization. Edmonds' algorithm for the weighted matching problem \cite{EDM65} is probably the most celebrated example. It is well-known that for the deterministic shortest path problem, when the costs are non negative, the primal-dual approach corresponds to Dijkstra's algorithm \cite{PapSte82}. We extend this approach to the SSP setting. Let us first recall  the linear formulation of the problem and its dual~:



\setlength{\tabcolsep}{30pt}

\begin{tabular}{lr}
\begin{minipage}{0.3\textwidth}
\begin{equation*}\tag{P}
\begin{array}{ll}
\min & c^Tx \\
(J-P)^T x & =  {\bf 1} \\
x & \geq 0
\end{array}
\end{equation*}

\end{minipage}&
\begin{minipage}{0.3\textwidth}
\begin{equation*}\tag{D}
\begin{array}{ll}
\max & {\bf 1}^T y \\
(J-P) \ y & \leq c  \\
\end{array}
\end{equation*}
\end{minipage}\\

\end{tabular}

The primal-dual algorithm works as follows here. Consider a feasible solution $\bar{y}$ to (D) (initially $\bar{y}=0$ is feasible if $c\geq 0$). Now let $\bar{\A}:=\{a\in \A:{\bf 1}_a^T\ (J-P)\ y = c_a \}$.  We know from complementary slackness that $\bar{y}$ is optimal if and only if there exists $x\geq 0: (J-P)^Tx={\bf 1}$ and $x_a = 0, \forall a\not\in\bar{\A}$ ((P) admits an finite optimum  by Assumption \ref{as:1}). The problem can be rephrased as a so-called restricted primal (RP), where $J_{\bar{\A}}, P_{\bar{\A}}$ and  $x_{\bar{\A}}$ are the restrictions of $J,P,x$ to the row in $\bar{\A}$. We also give its corresponding dual problem (DRP).

\begin{tabular}{lr}
\begin{minipage}{0.4\textwidth}
\begin{equation*}\tag{RP}
\begin{array}{ll}
\min \ \ \ \ \ \ \ {\bf 1}^T z \\
(J_{\bar{\A}}-P_{\bar{\A}})^T x_{\bar{\A}} + z & =  {\bf 1} \\
x_{\bar{\A}}, z & \geq 0\\
\end{array}
\end{equation*}

\end{minipage}&
\begin{minipage}{0.3\textwidth}
\begin{equation*}\tag{DRP}
\begin{array}{ll}
\max & {\bf 1}^T y \\
(J_{\bar{\A}}-P_{\bar{\A}}) \ y & \leq  0 \\
y & \leq \bf 1\\
\end{array}
\end{equation*}
\end{minipage}\\

\end{tabular}

If there is a solution of cost $0$ to (RP) then we have found an optimal solution to our original problem. Else, we use an optimal, positive cost solution $\underline{y}$ to  (DRP) and we update the initial solution by setting $\bar{y}:=\bar{y}+\epsilon \underline{y}$ with $\epsilon\geq 0$ maximum with the property that $\bar{y}+\epsilon \underline{y}$ remains feasible for $(D)$, and we iterate. The algorithm is known to converge in a finite number of steps ((RP) being non degenerate, no anti-cycling rule is needed to guarantee finiteness here \cite{PapSte82}) and this provides an alternative approach to the problem as long as we can also solve (RP) and (DRP).

Observe that (RP) can be interpreted as a SSP problem with action set $\bar{\A}\cup \{m+1,...,m+n\}$, where actions $m+k$, for all $k=1,...,n$ is an artificial action associated with state $k$ that lead to the target state $0$ with probability one. The cost of actions in $\bar{\A}$ is zero while the cost of the artificial actions $m+1,...,m+n$ is one. The primal-dual approach thus reduces the initial problem to a sequence of simpler 0/1 cost SSP problems. Note that (RP) is actually the problem of maximizing the probability of reaching state $0$ using only actions in $\bar{\A}$. This problem is known in the AI community as MAXPROB \cite{Kolobov12}. Little is known about this problem. We know though that it can be solved in weakly polynomial time because it fits into our framework and we can thus solve it using linear programming. We could also use Value Iteration, the simplex method or Policy Iteration as described in the previous subsections. Some simplex rules are known to be exponential in this setting \cite{melekopoglou1994complexity}~:  the question of the existence of a strongly polynomial algorithm is thus wide open for this subproblem too and we believe that MAXPROB deserves attention on its own. Using Howard's policy iteration algorithm to solve the auxiliary problem, the primal-dual approach provides an alternative finite algorithm to solve SSP for non negative costs instances.

\begin{theorem}
When $c\geq 0$, the primal-dual algorithm can be initialized with $\bar{y}=0$ and if the MAXPROB subproblems are solved using Howard's Policy Iteration (or any other simple Policy Iteration method), then it terminates in a finite number of steps.
\end{theorem}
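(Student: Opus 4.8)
The plan is to split the claim into three ingredients and then assemble them: (i) that $\bar y = 0$ is dual-feasible exactly when $c\ge 0$, so the algorithm may legitimately start there; (ii) that every restricted primal (RP) encountered is itself an SSP instance satisfying Assumption~\ref{as:1}, so that Howard's Policy Iteration terminates on it in finitely many steps; and (iii) that only finitely many outer (major) iterations occur. The initialization is immediate: with $\bar y = 0$ we have $(J-P)\bar y = 0$, so $\bar y$ is feasible for (D) if and only if $0\le c$, i.e. precisely under the hypothesis $c\ge 0$ (this is the only place the sign condition is used). Granting (ii) and (iii), the algorithm performs finitely many major iterations, each of which is a finite Policy Iteration run, and therefore terminates.

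For ingredient (ii), I would verify that each (RP) fits our framework. As already observed, (RP) is the SSP instance with action set $\bar\A\cup\{m+1,\dots,m+n\}$, where the artificial action $m+k$ leads from state $k$ to the target $0$ with probability one and cost $1$, while the actions of $\bar\A$ keep their transitions and have cost $0$. I must check Assumption~\ref{as:1} for this instance. The first condition holds trivially, since every state $k$ reaches $0$ through its own artificial action $m+k$, so there is a path from each state to $0$ in the support graph. The second condition holds because all costs are nonnegative ($0$ or $1$), whence $c^T x\ge 0$ on the cone $\{x\ge 0:(J-P)^T x = 0\}$ and there is no negative cost transition cycle. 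Thus Assumption~\ref{as:1} holds for (RP), and by the convergence theorem for Howard's Policy Iteration (the theorem following Proposition~\ref{lem:proper2}) it terminates in finitely many steps; to start it one may take the proper, deterministic and stationary policy that selects all artificial actions, whose existence follows from Lemma~\ref{lem:proper}, the path to $0$ being furnished by those actions. This shows that each MAXPROB subproblem is solved in a finite number of steps.

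For the outer loop (iii), I would argue as in the classical primal-dual scheme \cite{PapSte82}. Under Assumption~\ref{as:1} the original problem (P) is feasible (Lemma~\ref{lem:enough} and Lemma~\ref{lem:proper}) with finite optimum, hence (D) is feasible and bounded. At a major iteration, if the optimal value of (RP) is $0$ we stop with an optimal pair by complementary slackness. Otherwise the optimal DRP value ${\bf 1}^T\underline y$ is strictly positive; the step $\epsilon$ is finite (by boundedness of (D)) and strictly positive, since the direction $\underline y$ satisfies $(J_{\bar\A}-P_{\bar\A})\underline y\le 0$, so no currently tight action blocks an infinitesimal move while every non-tight action retains positive slack. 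The dual objective therefore strictly increases, by $\epsilon\,{\bf 1}^T\underline y>0$, at each major iteration. Because (RP) is non-degenerate it is solved without cycling, and the standard primal-dual finiteness argument of \cite{PapSte82} then guarantees that the admissible-set configuration does not recur, so that only finitely many major iterations take place.

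The main obstacle, I expect, is making this last part fully rigorous in our generalized (stochastic) setting rather than merely invoking \cite{PapSte82}. One must confirm that (RP) is genuinely non-degenerate here and that the bookkeeping of which tight constraints enter and leave $\bar\A$ as $\bar y$ moves cannot cycle; the cleanest route is to show that no admissible set $\bar\A$ repeats, leaning on the strict increase of ${\bf 1}^T\bar y$ together with the non-degeneracy just noted. By contrast, the inner finiteness in (ii) and the initialization in (i) are immediate once Assumption~\ref{as:1} has been verified for the subproblem.
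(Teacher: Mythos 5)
Your proposal is correct and follows essentially the same route as the paper, which establishes the theorem through the discussion preceding it: feasibility of $\bar{y}=0$ under $c\geq 0$, the observation that each (RP) is a $0/1$-cost SSP (MAXPROB) satisfying Assumption~\ref{as:1} so that Howard's Policy Iteration terminates on it, and finiteness of the outer loop via the standard primal-dual argument of \cite{PapSte82} using non-degeneracy of (RP). Your explicit verification of Assumption~\ref{as:1} for the subproblem and of the strict positivity and finiteness of $\epsilon$ merely fills in details the paper leaves implicit.
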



We are investigating the complexity of this extension of Dijkstra's algorithm to the SSP.  Observe that we do not need to impose that $c$ is non negative to apply the primal-dual approach. In fact, one can use the standard trick of adding an artificial constraint $\sum_a x_a \leq M$ to the problem, with $M$ ``big''  to find an initial dual solution and iterate the algorithm \cite{PapSte82}. The structure of the subproblem changes but it can still be solved using the simplex method. This  provides an alternative approach to Value Iteration and Policy Iteration in the general case too.

{
\bibliographystyle{abbrv} 
\setlength{\itemsep}{0mm}
\setlength{\parsep}{0mm}

}

\appendix
\section{Appendix}
\subsection{An example}\label{sec:A0}

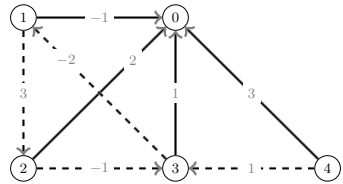
\begin{figure}[H]
\begin{center}
\scalebox{0.8}{
\begin{tikzpicture}
  \begin{scope}[scale=2.5,shape=circle,minimum size=0.5cm,fill=white]
    \tikzstyle{every node}=[draw, scale=0.75] 
    \node (1) at (0,1) {$1$};
    \node (2) at (0,0) {$2$};
    \node (3) at (1,0) {$3$};
    \node (4) at (2,0) {$4$};
    \node (5) at (1,1) {$0$};
    
  \end{scope}
  
  \tikzset{mystyle1/.style={->,gray,double=black}} 
  \tikzset{mystyle2/.style={->,dashed,gray,double=black}} 
  \tikzset{every node/.style={fill=white,scale=0.75}}

  \draw (1) edge[mystyle1] node  {$-1$} (5);
  \draw (2) edge[mystyle1] node[near end]  {$2$} (5);
  \draw (3) edge[mystyle1] node  {$1$} (5);
  \draw (4) edge[mystyle1] node  {$3$} (5);
   
   \draw (4) edge[mystyle2] node  {$1$} (3);
   \draw (3) edge[mystyle2] node[near end]  {$-2$} (1);
   \draw (1) edge[mystyle2] node  {$3$} (2);
   \draw (2) edge[mystyle2] node  {$-1$} (3);

\end{tikzpicture}}
\end{center}
\caption{A deterministic shortest path (with target state $0$)~: the dark ``actions'' represent the current policy, and the dashed ``actions'' have non positive reduced cost ; changing all actions with non positive reduced cost yield a new policy which is not proper. }\label{fig:2}
\end{figure}

\subsection{Proof of Lemma \ref{lem:well-def}}\label{sec:A1}
\begin{proof}
$x_{k}^{\Pi}=\Pi^T \cdot P^T\cdot x_{k-1}^{\Pi}$ for all $k\geq1$ and $x_0^\Pi=\Pi^T y_0^\Pi$. Therefore $x_{k}^{\Pi}=(\Pi^T \cdot P^T)^k \cdot \Pi^T y_0^\Pi$, where $y_0^\Pi$ is the original state distribution. It follows that  $\sum_{k=0}^{K} x_{k}^{\Pi}= \sum_{k=0}^{K}((\Pi^T \cdot P^T)^k \cdot \Pi^T y_0) =  (\sum_{k=0}^{K}(\Pi^T \cdot P^T)^k) \cdot \Pi^T y_0$ and because of the standard Lemma \ref{lem:Q}, it implies that $I-\Pi^T \cdot P^T$ is invertible and that $\lim_{K\rightarrow +\infty} \sum_{k=0}^{K}  x_{k}^{\Pi} =  (I-\Pi^T \cdot P^T)^{-1} \cdot \Pi^T y_0$. ($\lim_{k\rightarrow +\infty} \Pi^T \cdot P^T=0$ by definition of BT-properness since ${\bf 1}^T (P^T \cdot \Pi^T)^n \cdot e_i < 1$ for all $i=1,...,n$).\\
\end{proof}

\begin{lemma}\label{lem:Q}
Let $Q$ be a matrix with $\lim_{k\rightarrow +\infty} Q^k=0$. Then $I-Q$ is invertible, $\sum_{k\geq 0} Q^k$ is well defined and $\sum_{k\geq 0} Q^k=(I-Q)^{-1}$.
\end{lemma}

\subsection{Proof of Theorem \ref{th:VI}}\label{sec:A4}

We will prove that $\min_{\Pi \in \mathcal P} \lim_{K \rightarrow +\infty} {\sum_{k = 0}^{K}{c^T x_k^{\Pi}}}=  \lim_{K \rightarrow +\infty} \min_{\Pi \in \mathcal{P}^K} {\sum_{k = 0}^{K}{c^T x_k^{\Pi}}}$ with $y^\Pi_0:=e_i$, for all $i=1,...,n$, by proving both inequalities.

\fbox{\it $\leq$} Let $\Pi^*_K$ be an optimal solution to $\min_{\Pi \in \mathcal{P}^K} {\sum_{k = 0}^{K}{c^T x_k^{\Pi}}}$ computed by dynamic programming (as described above). $\Pi^*_K$ is a proper policy for aux-SSP for all $K$. By feasibility of $\Pi^*_K$, we thus have $V_K(i)=c^T \sum_{k=0}^K x^{\Pi^*_K}_k  \geq  \min_{\Pi \in \mathcal P_{aux}} \lim_{K \rightarrow +\infty} {\sum_{k = 0}^{K}{c^T x_k^{\Pi}}}$ (observe that this minimum is well defined since we are still satisfying Assumptions \ref{as:1} in aux-SSP). By construction $\{V_K(i), K\geq 1\}$ is non-increasing, hence because it is bounded from below, it converges and  $\lim_{K \rightarrow +\infty} V_K(i)$ is well-defined.  Taking the limit we get $\lim_{K \rightarrow +\infty} c^T \sum_{k=0}^K x^{\Pi^*_K}_k  \geq  \min_{\Pi \in \mathcal P_{aux}} \lim_{K \rightarrow +\infty} {\sum_{k = 0}^{K}{c^T x_k^{\Pi}}}$. But $\min_{\Pi \in \mathcal P_{aux}} \lim_{K \rightarrow +\infty} {\sum_{k = 0}^{K}{c^T x_k^{\Pi}}}\geq  \min_{\Pi \in \mathcal P} \lim_{K \rightarrow +\infty} {\sum_{k = 0}^{K}{c^T x_k^{\Pi}}}$ if $M(i)$ is chosen so that auxiliary actions can be assumed not to be used in an optimal policy in $\mathcal P_{aux}$. This is the case for $M(i)\geq V^*(i)$.

\fbox{\it $\geq$} Let $\Pi^*$ be an optimal proper deterministic and stationary solution to $\min_{\Pi \in \mathcal P} \lim_{K \rightarrow +\infty} {\sum_{k = 0}^{K}{c^T x_k^{\Pi}}}$ ($\Pi^*$ exists in our setting by Corollary \ref{cor:detstat}). Let us denote by $\bar\Pi$ the policy of $\mathcal P_{aux}$ that chooses the auxiliary action for each state. Consider the policy $\Pi_K$ of $\mathcal P^K$ obtained from using $\Pi^*$ in periods $0,...,K-1$ and policy $\bar\Pi$ in period $K$. By feasibility of $\Pi_K$, we have $c^T x^{\Pi_K}_K + \sum_{k=0}^{K-1} c^T x_k^{\Pi_K} \geq \min_{\Pi \in \mathcal{P}^K} {\sum_{k = 0}^{K}{c^T x_k^{\Pi}}}$. Now taking the limit as $K$ tends to infinity, we have the result since $\lim_{K\rightarrow +\infty} x^{\Pi_K}_K=\lim_{K\rightarrow +\infty} x^{\Pi^*}_{K}=0$ as $\Pi_K$ differs from $\Pi^*$ only in period $K$, and $\Pi^*$ is $i$-proper.\\

\subsection{Proof of Proposition \ref{lem:proper2}}\label{sec:A2}
\begin{proof}
We denote by $y^\Pi$ the dual solution associated with $\Pi$ i.e. $y^\Pi=c_\Pi (I-P_\Pi)^{-T}$. Assume for contradiction that $\Pi'$ is not proper. Let $G_{\Pi'}$ be the support graph of this policy. Since $\Pi'$ is not proper,  there exists a non empty set of states that are not in $R^-(0)$. It implies that there is a set of vertices $V$ in $G_{\Pi'}$ such that $0_\S, 0_\A \not\in V$ and $\delta^+(V)=\emptyset$. Now there exists an action $a$ of $\A^{>}(\Pi)$ in $V$, otherwise vertices in $V$ are not in $R^-(0)$ in $G_{\Pi}$, a contradiction. Consider the graph $G_a$ obtained by taking the subgraph of $G_{\Pi'}$ induced by the vertices in $V$ that are reachable from $a$, by removing the edge between $a$ and the unique state $s$ with $a\in \A(s)$, and by adding an artificial state $s_0$ with $a$ as its unique possible action. Let $\A_a$ be the set of actions in $G_a$.

We can associate a $s_0$-SSP instance to $G_a$ by considering $s$ as the target state. We can assume w.l.o.g. that $\Pi'$ is a $s_0$-proper policy for this problem\footnote{We can assume without loss of generality that every vertex in $G_a$ is in $R^-(s)$. If not, we change the set $V$ by considering instead the vertices in $G_a$ that do not have a path to $s$ (and we iterate the procedure if $V$ still does not satisfy the required property). Now it is clear that $G_a$ contains at least one state $s'$ and one action $\Pi(s')$. Again not all actions in $G_a$ are actions from $\Pi$ because otherwise $\Pi$ would not be proper.}. Now let $x^{\Pi'}$ be the corresponding flux vector (in principle it is defined only on the actions in $G_a$ but we extend the flux on the other action by setting it to zero). We can interpret  $x^{\Pi'}$ as a (non zero) transition cycle of the original problem  (the flux is defined on the same set of actions and $x^{\Pi'}(a)=1$). The vector ${x}^{\Pi'}\geq 0$ thus satisfies $(J-P)^T{x}^{\Pi'}=0$. Now  the reduced cost ${\bar{c}}^{\Pi}(a')=c(a') - c_\Pi (I-P_\Pi)^{-T} (J-P)^T {\bf 1}_{a'} \leq 0$ for all $a'\in \A_a$ by definition of $\Pi$ and $\Pi'$. Also, as already observed, ${\bar{c}}^{\Pi}(a)<0$. Let us analyze $c {x}^{\Pi'}$. We have $c {x}^{\Pi'}= \sum_{a'\in \A_a} c(a') \cdot {x}^{\Pi'}(a')=(\sum_{a'\in \A_a} {\bar{c}}^{\Pi}(a') \cdot {x}^{\Pi'}(a'))+ c_\Pi (I-P_\Pi)^{-T} (J-P)^T  {x}^{\Pi'}$. Because $(J-P)^T {x}^{\Pi'}=0$, we have $c \  {x}^{\Pi'} = \sum_{a'\in \A_a} {\bar{c}}^{\Pi}(a') \   {x}^{\Pi'}(a')$ but this is negative as $ {x}^{\Pi'}(a')>0$, ${\bar{c}}^{\Pi}(a') \leq 0$ for all $a'\in \A_a$, and ${\bar{c}}^{\Pi}(a) < 0$ .  Therefore ${x}^{\Pi'}$ is a negative cost transition cycle for our original instance, but this contradicts Assumption \ref{as:1}. \\

$c \ x^{\Pi'} - c \  x^\Pi = c \ (x^{\Pi'} - x^\Pi) = ({\bar{c}}^{\Pi} + c_\Pi (I-P_\Pi)^{-T} (J-P)^T) (x^{\Pi'} - x^\Pi)$. But by feasibility $(J-P)^T x^{\Pi'} = (J-P)^T  x^\Pi$ and thus $c \ x^{\Pi'} - c \ x^\Pi = {\bar{c}}^{\Pi}\  (x^{\Pi'} - x^\Pi) = {\bar{c}}^{\Pi} \  x^{\Pi'}$ (as ${\bar{c}}^{\Pi} = 0$ for all $a\in \Pi$ by definition of the current basis) . This latter term is negative as $\Pi'$ is using at least one action in $\A^{>}(\Pi)$ and the actions in $\Pi$ have reduced cost zero.
\end{proof}

\end{document}